\newtheorem{theorem}{Theorem}
\newtheorem{proposition}[theorem]{Proposition}
\newtheorem{corollary}[theorem]{Corollary}
\newtheorem{lemma}[theorem]{Lemma}
\newtheorem*{problem*}{Problem}
\numberwithin{theorem}{section}
\theoremstyle{definition}
\newtheorem{example}[theorem]{Example}
\newcommand{\comment}[1]{}
\newcommand{\Until}{\mathrel{{\mathrm{U}}}}
\newcommand{\Next}{\mathrm{X}}
\newcommand{\Event}{\mathrm{F}}
\newcommand{\Glob}{\mathrm{G}}
\newcommand{\LLL}{\mathrm{L}}
\newcommand{\Aa}{\mathcal{A}}
\newcommand{\BB}{\mathcal{B}}
\newcommand{\CC}{\mathcal{C}}
\newcommand{\DD}{\mathcal{D}}
\newcommand{\PP}{\mathcal{P}}
\newcommand{\NP}{\textnormal{\textsf{NP}}}
\newcommand{\PSPACE}{\textnormal{\textsf{PSPACE}}}
\newcommand{\NPSPACE}{\textnormal{\textsf{NPSPACE}}}
\newcommand{\LTL}{LTL$_f$}
\DeclareMathOperator{\paths}{paths}
\DeclareMathOperator{\sets}{sets}
\newcommand{\ore}[0]
{\mathrm{R^{\ell < \ell}_{\sets}}}
\newcommand{\orz}[0]
{\mathrm{R^{\ell \le \ell}_{\sets}}}
\newcommand{\orep}[0]
{\mathrm{R^{\ell < \ell}_{\paths}}}
\newcommand{\orzp}[0]
{\mathrm{R^{\ell \le \ell}_{\paths}}}
\newcommand{\ordp}[0]
{\mathrm{R^{\ell < f}_{\paths}}}
\newcommand{\CS}[1]
{\textsc{Constrained-$#1$-Sync}}
\newcommand{\CW}[1]
{\textsc{Constrained-$#1$-Word}}
\DeclareMathOperator{\first}{first}
\DeclareMathOperator{\last}{last}
\title{Traversing automata with current state uncertainty\\ under \LTL{} constraints}
\author{Andrew Ryzhikov$^1$, Petra Wolf$^2$}
\date{$^1$Department of Computer Science, University of Oxford, UK\\
$^2$ LaBRI, Université de Bordeaux, France}
\begin{document}

\maketitle

\begin{abstract}
In this paper, we consider a problem which we call \LTL{} model checking on paths:
given a DFA~$\Aa$ and a formula $\phi$ in LTL on finite traces, does there exist a word $w$ such that every path starting in a state of $\Aa$ and labeled by $w$ satisfies $\phi$? The original motivation for this problem comes from the constrained parts orienting problem, introduced in [Petra Wolf, ``Synchronization Under Dynamic Constraints'', FSTTCS 2020], where the input constraints restrict the order in which certain states are visited for the first or the last time while reading a word $w$ which is also required to synchronize~$\Aa$. We identify very general conditions under which \LTL{} model checking on paths is solvable in polynomial space. For the particular constraints in the parts orienting problem, we consider \PSPACE{}-complete cases and one \NP{}-complete case. The former provide very strong lower bound for \LTL{} model checking on paths. The latter is related to (classical) \LTL{} model checking for formulas with the until modality only and with no nesting of operators. We also consider \LTL{} model checking of the power-set automaton of a given DFA, and get similar results for this setting. For all our problems, we consider the case where the required word must also be synchronizing, and prove that if the problem does not become trivial, then this additional constraint does not change the complexity. 
\end{abstract}

\section{Introduction}
Synchronization is a classical and well-studied way of regaining control over an automaton if its current state is unknown. Given a complete deterministic finite automaton (complete DFA), we say that a finite word is synchronizing for it, if it maps all its states to one particular state. Such a word can be seen as a word resolving the current state uncertainty: if the structure of the DFA is known, then after reading a synchronizing word the current state of the DFA is also known. If a DFA admits a synchronizing word, it is also called synchronizing.

\paragraph*{Orienting parts.} One of the oldest applications of synchronizing automata is the problem of designing parts orienters, which are simple robots or machines that get a number of identical objects in various orientations and need to bring them all in the same orientation, usually using conveyor belts with obstacles. This approach is relatively cheap, since it does not require introducing any sensors that register the orientation of each object, and instead just makes sure that the final orientation is the same regardless of the initial orientation of the object, see~\cite{DBLP:journals/tcs/AnanichevV04} for an example. In his pioneering work, Natarajan \cite{DBLP:conf/focs/Natarajan86} modeled parts orienters as complete DFAs where states correspond to possible orientations of a part and letters correspond to applying different modifiers or obstacles. Because of their shape and design, those modifiers can have different effect on the parts depending on the orientation of the parts. In this context, he studied automata which were later called \emph{orientable}~\cite{DBLP:conf/lata/Volkov08}.
Many different classes of automata have since been studied regarding their synchronization behavior. We refer to~\cite{DBLP:conf/lata/Volkov08,beal_perrin_2016,JALC20,Kari2021} for an overview.

\begin{figure}
    \centering
    \includegraphics[width=0.4\textwidth]{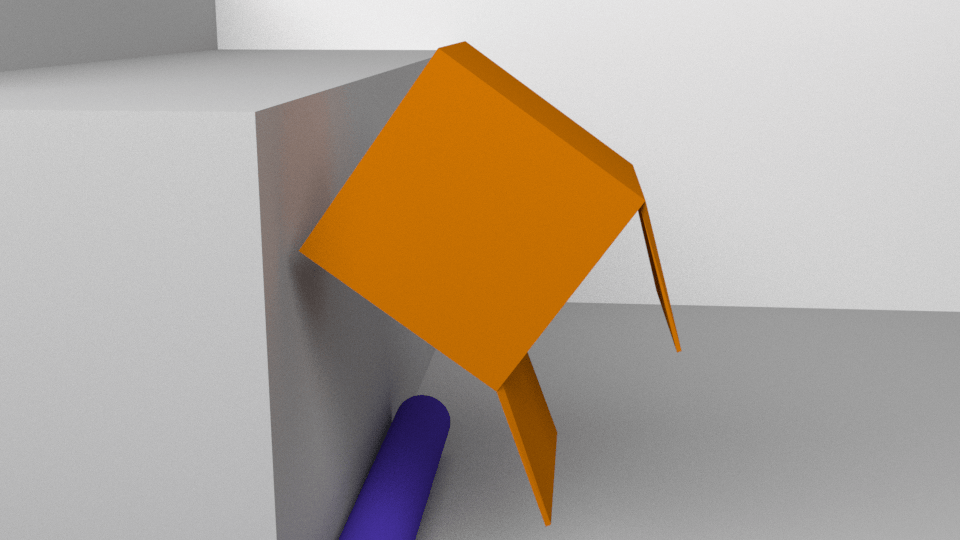}
    \includegraphics[width=0.4\textwidth]{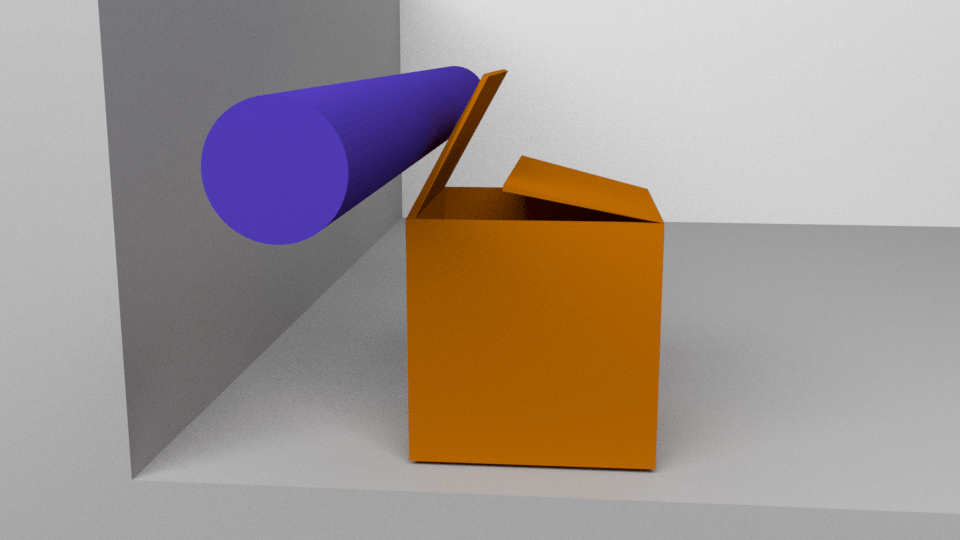}
    \caption{Illustration of how different modifiers of a parts orienter can have an impact on the part itself. Here, the part consists of a box with lids that open when the box is rotated and need to be closed again by a different modifier.}
    \label{fig:box}
\end{figure}
The original motivation of designing a parts orienter was revisited in~\cite{DBLP:journals/ijfcs/TurkerY15} where T{\"{u}}rker and Yenig{\"{u}}n modeled the design of an assembly line, which again brings a part from an unknown orientation into a known orientation, with different costs of modifiers. 
What has not been considered so far is that different modifiers can have different impacts on a part, depending on the the history of its orientations. For example, if the part is a box with a fold-out lid, turning it upside-down will cause the lid to open as depicted in \Cref{fig:box}. In order to close the lid, one might need another modifier such as a low bar which brushes the lid and closes it again. To specify that a parts orienter must deliver the box facing upward with a closed lid, one needs to encode the property ``when the box is in the state \emph{facing down}, it must later be in the state \emph{lid closed}''. But this does not stop us from opening the lid again in the future, so we need to be more precise and instead encode the property ``after \textbf{the last time} the box was in the state \emph{facing down}, it must visit the state \emph{lid closed} at least once''. 
In order to model these types of constraints in the automaton abstraction of a parts orienter, we associate with each word several different traversal relations describing how it traverses the states of a DFA. We then implement the conditions mentioned above by requiring the traveral relation associated with the word to agree with a given binary relation $R$ on the states of the DFA. We consider two different cases: restricting the traversal relation $\orep$, intuitively, requires visiting the state \emph{lid closed} no matter what, restricting the traversal relation $\orzp$ only requires closing the lid if it was opened before that.

To motivate the third traversal relation we consider, $\ordp$, let us again picture the box with a lid, but this time the box initially contains some water. We would like to have the box in a specific orientation with the lid open but the water must not be shed while orientating. We have a modifier that opens the lid and a modifier which rotates the box. Clearly, we do not want the box to face downwards after the lid has been opened. So, we encode the property ``once the state \emph{lid open} is reached, the state \emph{facing downwards} must never be visited again''. 

\paragraph{Controlling multiple automata sharing a common resource.} Consider now the scenario where several copies of the same DFA $\Aa$ are controlled by the same input, and we know precisely the set~$S$ of current states of these copies. If $S$ is synchronizing, that is, if there exists a word bringing all the states of $S$ to one particular state, then after applying a synchronizing word for it we know that the current state of all the copies is the same. However, the DFAs can have some side effects on the resources which are shared between all of them. As an example, let $p$ be a state of~$\Aa$ such that visiting it requests a shared resource and marks it as busy for all DFAs, and $q$ be a state such that visiting it releases the resource and marks it as free for all DFAs. It is natural to ask that after the last moment of time when the resource was requested by at least one copy of $\Aa$, it was released by a copy, not necessarily the same one. In other words, it must not happen that some copy required the resource and no copy released it afterwards. This restriction can be modeled by considering sets of possible current states of~$\Aa$, and is formalized by including the pair $(p, q)$ in the binary relation restricting the traversal relation $\ore$ defined in \Cref{sec:definitions}. We also consider a less strict traversal relation $\orz$ where the last visits of $p$ and $q$ are allowed to happen at the same time. We show that checking whether a word satisfying such requirements exists is \PSPACE{}-complete, even if~$S$ is the set of all states of $\Aa$.

Another possible interpretation of this setting is that we are given a DFA $\Aa'$ and a constraint, and want to check if there exists a word accepted by $\Aa'$ such that the path labeled by it in $\Aa'$ satisfies this constraint. If $\Aa'$ is provided as a part of the input, this problem is solvable in polynomial time for any fixed order constraint. However, we show that if $\Aa'$ is defined as the power-set automaton of a DFA $\Aa$ provided in the input, the problem becomes \PSPACE{}-complete, even for a fixed constraint. Hence, with such a succinct representation the complexity of the problem increases drastically, but it is still better than the trivial approach of explicitly constructing the power-set automaton, which takes exponential time and space.

The traversal relations $\orep$, $\orzp$ and $\ordp$ mentioned above also make sense in the setting where we have a DFA whose structure is known to us, but its current state is not. Constraining them ensures that whichever state we start with, when applying the sequence of input commands the requirements are satisfied. For example, if there is only one copy of a DFA releasing and requesting a resource, and its current state is unknown, we can ask for a word which does not request the resource after it is released no matter in which state the DFA was, and which brings it to a known state.

For $\ore$, $\orep$ and $\orzp$ it is also natural to ask for a word which satisfies the conditions but is not necessarily synchronizing. In this case, we only care that the resource is eventually released, but do not require that all the copies are in the same state afterwards. We prove that dropping the requirement that the word must be synchronizing does not change the complexity of the problem. However, satisfying the constraints imposed on $\orz$ and $\ordp$ with a word which is not required to be synchronizing is trivial: they are satisfied by the empty word. 

\paragraph*{LTL on finite traces.} Linear temporal logic (LTL) is a popular and expressive formalism for specifying properties of systems \cite{Baier2008}. Classical LTL is defined on infinite traces and is more suitable for describing the behavior of reactive systems. Its variation, LTL on finite traces \cite{DeGiacomo2013}, denoted \LTL{}, is more suitable for describing the behavior of systems performing  tasks which eventually terminate. The difference between LTL and \LTL{} is discussed in \cite{DeGiacomo2014}. All the constraints on the traversal relations that we consider, as well as their Boolean combinations, can be expressed as formulas of \LTL{} in a suitable setting. Moreover, the constructions showing computational complexity lower bounds for these constraints provide very strong lower bounds on the complexity of model checking in our setting.

The classical \LTL{} model checking problem is, given an \LTL{} formula $\phi$ and a transition system with a specified initial state $s$ and without any labels of the transitions, check if there exists a path in the transition system starting in $s$ and satisfying $\phi$. Our setting differs in that there is no specified initial state, and the transitions are labeled by letters of the alphabet. In more detail, we consider either a set of paths starting in each state of the DFA and labeled by the same word (\LTL{} model checking on paths), or a unique path in the power-set automaton labeled by this word (\LTL{} model checking on sets). However, we show that the polynomial space algorithm for the classical setting can be adapted to our settings without an increase in its computational complexity. In particular, this provides polynomial space upper bounds for the problem of satisfying all the restrictions discussed above. On the other hand, in the classical setting model checking becomes solvable in polynomial time if the \LTL{} formula is fixed. In contrast, we show that there exist fixed formulas such that \LTL{} model checking on paths and on sets, as well as their synchronized versions, remain \PSPACE{}-complete, even for DFAs over constant-size alphabet. For synchronized \LTL{} model checking on sets, it follows from the \PSPACE{}-completeness of synchronization with constraints on $\orz$ for a constraint relation of constant size, originally proved in \cite{Wolf2020}. For \LTL{} model checking on paths it was explicitly stated and proved in \cite{Bertrand2023} when the alphabet size grows with the number of states. See the subsection ``Existing results and our contributions'' below for a more detailed discussion of the differences between the settings of \cite{Bertrand2023} and this paper.

\paragraph*{Related work.}
The problem of checking whether for a given DFA $\Aa=(Q, \Sigma, \delta)$ there exists a synchronizing word can be solved in time $\mathcal{O}(|Q|^2 \cdot |\Sigma|)$ ~\cite{DBLP:journals/siamcomp/Eppstein90,DBLP:conf/lata/Volkov08}. The problem of finding such a word is solvable in time~$\mathcal{O}(|Q|^3 + |Q|^2 \cdot |\Sigma|)$. In comparison, if we only ask for a subset of states $S \subseteq Q$ to be synchronized, the problem becomes \PSPACE-complete~\cite{DBLP:conf/dagstuhl/Sandberg04}. These two problems have been investigated for several restricted classes of automata involving orders on states. 
Here, we want to mention the class of \emph{oriented} automata, whose states can be arranged in a cyclic order which is preserved by all transitions. This model has been studied among others in \cite{DBLP:conf/focs/Natarajan86,DBLP:journals/siamcomp/Eppstein90,DBLP:journals/tcs/AnanichevV04,DBLP:journals/fuin/RyzhikovS18,DBLP:conf/lata/Volkov08}. If the order on the states is linear instead of cyclic, we get the class of monotonic automata which was studied in \cite{DBLP:journals/tcs/AnanichevV04,DBLP:journals/fuin/RyzhikovS18}.

A (complete or partial) DFA $\Aa = (Q, \Sigma, \delta)$ is called \emph{partially ordered}~\cite{Brzozowski1980} or \emph{weakly acyclic}~\cite{DBLP:journals/tcs/Ryzhikov19a} if there exists an ordering of the states $q_1, q_2, \dots, q_n$ such that if $\delta(q_i, a) = q_j$ for some letter $a \in \Sigma$, then $i \le j$.
In other words, all cycles in a partially ordered DFAs are self-loops.
Each synchronizing partially ordered complete DFA admits a synchronizing word of linear length~\cite{DBLP:journals/tcs/Ryzhikov19a}.
Going from complete DFAs to partial DFAs brings a jump in complexity. For example, the so called \emph{careful synchronization} problem for partial DFAs asks for a word that is defined on all states and that brings all states to one particular state.
This problem is \PSPACE-complete for partial DFAs, even over binary alphabets and with only one undefined transition~\cite{DBLP:journals/mst/Martyugin14, DBLP:conf/wia/Martyugin12}. 

Linear temporal logic (LTL) was introduced as a formalism for verifying properties of programs \cite{Pnueli1977, Pnueli1981}. The LTL model checking problem is, given a transition system and an LTL formula, check if there exists an infinite path in the transition system satisfying the formula. This problem is \PSPACE{}-complete and becomes solvable in polynomial time if the formula is fixed \cite{Baier2008}. In \cite{Markey2004}, the computational complexity of many fragments of LTL was investigated. LTL on finite traces, denoted \LTL{}, was introduced in \cite{DeGiacomo2013} to describe properties of finite paths. The complexity of satisfiability of formulas from different fragments in \LTL{} is studied in \cite{Fionda2018}, and a framework for \LTL{} satisfiability checking is proposed in \cite{Li2014}. 

\paragraph*{Existing results and our contributions.} The parts orienting problem and traversal constraints were introduced in \cite{Wolf2020}, a preliminary conference version of this paper. See \Cref{subsec-order-defs} for the discussion on the differences of the notation between \cite{Wolf2020} and the current paper.  In \cite{Wolf2020}, it was shown that checking the existence of a synchronizing word with constraints on $\ore$, $\orz$ and $\ordp$ is \PSPACE{}-complete over constant-size alphabet, and for constraints on $\ore$ and $\orz$ it stays \PSPACE{}-complete even for constraint relations of constant size. It was also shown that checking the existence of a synchronizing word with constraints on $\orep$ is in \NP{}, and with constraints on $\orzp$ is \NP{}-hard. 

This setting was later extended to arbitrary \LTL{} formulas on paths in \cite{Bertrand2023}. There, is was proved that checking the existence of a synchronizing word satisfying a given \LTL{} formula is solvable in polynomial space, and there exist fixed formulas for which it stays \PSPACE{}-complete. It was also proved that synchronization with constraints on $\orzp$ is \PSPACE{}-complete. However, all the mentioned complexity lower bounds in \cite{Bertrand2023} require alphabets of variable size, and atomic propositions that are more general than just the names of the states.

In this paper, we generalize, unify and improve the results of \cite{Wolf2020} and \cite{Bertrand2023}. In \Cref{sec:in-pspace}, we recall the \PSPACE{} upper bound for checking the existence of a synchronizing word under \LTL{} constraints in \cite{Bertrand2023}, and discuss more general conditions where it can be applied. The most interesting cases are what we call \LTL{} model checking on paths and sets, as well as their synchronized versions. As an application, we show that this generalized algorithm solves synchronization with constraints on $\ore$ and $\orz$ in polynomial space.

In \Cref{sec:pspace-c}, we prove that for \LTL{} model checking on paths and sets, as well as their synchronized versions, there exist fixed \LTL{} formulas for which they remain \PSPACE{}-hard, even for DFAs over constant-size 
 alphabets (\Cref{thm:mc-sets-main} and \Cref{thm:mc-paths-main}). Moreover, in all our complexity lower bounds in this paper, we assume that the only atomic proposition which is true in a state is the name of this state. Hence, the properties such as ``the state $p$ belongs to a set $S$'' cannot be expressed by fixed \LTL{} formulas. The lower bounds in \Cref{sec:pspace-c} come from \PSPACE{}-hardness of synchronization with constraints on $\ore$, $\orz$, $\orep$ and $\ordp$.

In \Cref{sec:np-c}, we close an open problem stated in \cite{Wolf2020} and show that synchronization with constraints on $\orep$ is \NP{}-complete (\Cref{thm:l-l-npc}). We show a connection of this problem to (classical) \LTL{} model checking where the formula contains only the until modality without nesting, which we prove to be \NP{}-hard (\Cref{thm:until-npc}). The expressivity of \LTL{} formulas with limited nesting of until operators is studied in \cite{Therien2004}.

\section{Main definitions}\label{sec:definitions}
A \emph{complete deterministic finite semi-automaton} $\Aa = (Q, \Sigma, \delta)$, which we simply call a DFA in this paper, consists of the finite set $Q$ of states, the finite alphabet $\Sigma$, and the transition function $\delta: Q \times \Sigma \to Q$.  The transition function $\delta$ is generalized to finite words in the usual way. It is further generalized to sets of states $S \subseteq Q$ as $\delta(S, w) = \{\delta(q, w)\mid q \in S\}$. We sometimes refer to $\delta(S, w)$ as~$S.w$. We often talk about active states. Given a set $S$ of previously active states, we say that $q$ is active after the application of $w$ if $q \in S.w$. 
We denote by $|S|$ the size of the set~$S$. We might identify singleton sets with their elements. If for some $w\in \Sigma^*$ and a set of states $S\subseteq Q$, $|S.w| = 1$, we say that the word $w$ \emph{synchronizes} the set~$S$.
Given a word $w$, we denote by $|w|$ the length of $w$, by $w[i]$ the~$i^\text{th}$ letter of $w$ (or the empty word $\epsilon$ if $i = 0$) and by $w[i..j]$ the factor of $w$ from letter $i$ to letter~$j$. 
For a state~$q$, we call the sequence of active states $q.w[1..i]$ for $0 \leq i \leq |w|$ the \emph{path} induced by $w$ starting at~$q$.
A \emph{DFA acceptor} $\Aa = (Q, \Sigma, \delta, i, F)$ is a DFA with a chosen initial state $i \in Q$ and a set of final state $F \subseteq Q$. The \emph{language accepted by} $\Aa$ is the set of words $w$ such that $i.w \in F$.

Let $w \in \Sigma^*$ and $q \in Q$. We define $\paths_\Aa(q, w)$ to be the path in $\Aa$ starting in $q$ and labeled by $w$. Given $S \subseteq Q$, define $\paths_\Aa(S, w) = \cup_{q \in S} \{\paths_\Aa(q, w)\}$. 
Define the \emph{power-set automaton} $\PP(\Aa) = (2^Q, \Sigma, \delta')$ of $\Aa$ as usually: the set of states of $\PP(\Aa)$ is the set $2^Q$ of all subsets of $Q$, and the transition function is defined as $\delta'(S, a) = \{\delta(q, a) \mid q \in S\}$. We define $\sets_\Aa(S, w)$ to be the single path in~$\PP(\Aa)$ starting in the state $S$ and labeled by $w$.

A \emph{partial DFA} is a DFA $\Aa = (Q, \Sigma, \delta)$ where the transition function $\delta$ is allowed to be partial, that is, undefined for some input values. Its extensions to $Q \times \Sigma^*$ and further to $2^Q \times \Sigma^*$ are defined in the same way as for complete DFAs. We say that a word $w$ \emph{carefully synchronizes} $S \subseteq Q$ if the action of $w$ is defined for every state in $S$ and $|\delta(S, w)| = 1$. If $w$ carefully synchronizes $Q$, we say that it \emph{carefully synhronizes} $\Aa$, and then call both $w$ and $\Aa$ \emph{carefully synchronizing}.

The \emph{underlying digraph} of a DFA $\Aa$ is obtained by forgetting the labels of transitions in $\Aa$. Given a digraph $G = (V, E)$, we call a \emph{path} a sequence of vertices interleaved with edges
$$\rho = v_1 \xrightarrow[]{e_1} v_2 \xrightarrow[]{e_2} \dots \xrightarrow[]{e_n} v_{n+1}$$
such that $e_i = (v_i, v_{i + 1})$ for $1 \le i \le n$. We emphasize that a path can have repeating vertices. We say that a vertex $v$ is reachable from a vertex $u$ if there exists a path from $u$ to $v$. The relation ``$u$ and $v$ are reachable from each other'' is an equivalence relation, and a class of this relation is called a \emph{maximal strongly connected component}. A maximal strongly connected component is called a \emph{sink} if only vertices belonging to it are reachable from its vertices. A digraph is called \emph{strongly connected} if every its vertex is reachable from every other vertex. A DFA is called \emph{strongly connected} if its underlying digraph is strongly connected.

We expect the reader to be familiar with basic concepts of automata theory, computational
complexity and model checking, and refer to the textbooks~\cite{DBLP:books/daglib/0086373} and \cite{Baier2008}
 as a reference.

\subsection{\LTL{} model checking}\label{subsec:\LTL{}-defs}

Let $P$ be a finite set, which we call the set of \emph{atomic propositions}. The \emph{linear temporal logic on finite traces (\LTL{})} over $P$ is the set of formulas defined inductively as follows. The symbol $\top$ representing  the constant true, and every element of $P$ are in \LTL{}. Moreover, if $\phi, \psi$ are in \LTL{}, then $\neg \phi, \phi \lor \psi, \phi \land \psi, \Next \phi, \phi \Until \psi, \Event \phi, \Glob \phi$, as well as all Boolean combinations of this formulas are in \LTL{}. The operators $\Until, \Next, \Event, \Glob$ stand for until, next, finally and globally, and their semantics is the standard semantics on finite traces \cite{DeGiacomo2013,Bertrand2023}, which are in our setting paths in a DFA. The size $|\phi|$ of the formula $\phi$ is the length of a bit string encoding it.

We also consider restricted formulas of \LTL{}. Namely, given a modality $T$, we denote by  $\LLL^+(T)$ the set of \LTL{} formulas where negation is allowed only in front of atomic propositions, and the only allowed modality is $T$.

We defined the following problems.

\begin{problem*}
  \textsc{\LTL{} model checking on paths} \\
  \textbf{Input:} A DFA $\Aa = (Q, \Sigma, \delta)$ and an \LTL{} formula $\phi$ over the set $P = Q$. \\
  \textbf{Output:} Yes, if and only if there exists a word $w$ such that for every $\pi \in \paths(Q, w)$ we have $\pi \models \phi$.
\end{problem*}

\begin{problem*}
  \textsc{\LTL{} model checking on sets} \\
  \textbf{Input:} A DFA $\Aa = (Q, \Sigma, \delta)$ and an \LTL{} formula $\phi$ over the set $P = Q$. \\
  \textbf{Output:} Yes, if and only if there exists a word $w$ such that for the path $\pi = \sets(Q, w)$ we have $\pi \models \phi$.
\end{problem*}

The problems \textsc{Synchronized \LTL{} model checking on paths} (\textsc{on sets}) are defined similarly with an additional requirement that $w$ must synchronize $\Aa$. As discussed in \Cref{subsec-order-defs}, it is not necessarily the case that an \LTL{} fomula which is true on paths is also true on sets, or vice versa.

\subsection{Traversal relations}\label{subsec-order-defs}

We now describe different traversal relations $\mathrm{R}(w)$ which are induced by the traversal of a word $w$ through the states of a DFA. The first two relations are defined by the last visits of the states to each other, while the third relation is defined by the connection between first and last visits. Given a DFA $\Aa = (Q, \Sigma, \delta)$ and an arbitrary relation $R \subseteq Q^2$ which we call a \emph{constraint relation}, we say that a traversal relation $\mathrm{R}(w)$ \emph{agrees} with $R$ if and only if $R \subseteq \mathrm{R}(w)$. We then consider the following problem for each traversal relations $\mathrm{R}(w)$ introduced below.

\begin{problem*}\label{prob-sync-under}
  \CS{\mathrm{R}(w)} \\
  \textbf{Input:} A DFA $\Aa = (Q, \Sigma, \delta)$ and a constraint relation $R \subseteq Q^2$. \\
  \textbf{Output:} Yes, if and only if there exists a word $w \in \Sigma^*$ such that $R \subseteq \mathrm{R}(w)$ and $w$ synchronizes $\Aa$.
\end{problem*}

We also consider version of this problem where $w$ is not required to synchronize $\Aa$. We call this variant \CW{\mathrm{R}(w)}. In \cite{Wolf2020} and \cite{Bertrand2023}, the problem \CS{\mathrm{R}(w)} was called \textsc{Sync-Under-$\lessdot_w$}.

\comment{
\begin{problem*}\label{prob-subset-sync-under}
  \textsc{Subset-Sync-Under-$\lessdot_w$} \\
  \textbf{Input:} A DFA $\Aa = (Q, \Sigma, \delta)$, $S \subseteq Q$, and a relation $R \subseteq Q^2$. \\
  \textbf{Output:} Yes, if and only if there exists a word $w \in \Sigma^*$ with $|S.w| = 1$ and $R \subseteq \lessdot_w$.
\end{problem*}
}

\comment{
It is reasonable to distinguish whether the order should include the initial configuration of the automaton or if it should only describe the consequences of the chosen transitions. 
In the former case, we refer to the problem as {\sc Sync-Under-$\mathit{0}$-$\lessdot_w$} (starting at $w[0]$), in the latter case as {\sc Sync-Under-$\mathit{1}$-$\lessdot_w$} (starting at $w[1]$), and if the result holds for both variants, we simply refer to is as {\sc Sync-Under-$\lessdot_w$}.
Examples for positive and negative instances of the problem synchronization under order for some discussed variants are illustrated in Figure~\ref{fig:expl}. 
}

 Given a DFA $\Aa = (Q, \Sigma, \delta)$, define $\first(q, w, S)$ to be the
minimum of the positions at which the state $q$ appears as an active state over all paths induced by $w$ starting at some state in $S$. Accordingly, let $\last(q, w, S)$ be the maximum of those positions.
 Note that $\first(q, w, S) = 0$ for all states $q \in S$ and is strictly positive for $q \in Q \backslash S$. If~$q$ does not
 appear on a path induced by $w$ on $S$,
we set $\first(q, w, S) := +\infty$ and $\last(q, w, S) := -\infty$. For any natural number $n$, we make standard assumptions that $-\infty < n < +\infty$, and we also assume that $-\infty \le -\infty$, $+\infty \le + \infty$, but the inequalities $-\infty < -\infty$ or $+\infty < +\infty$ do not hold.

We remark that in \cite{Wolf2020}, the traversal relations $\mathrm{R}(w)$ (denoted $\lessdot_w$ there) were called orders. However, in some cases they are not transitive or reflective, and are thus not order relations. Instead, these traversal relations describe the order in which $w$ traverses the DFA. The traversal relations $\ore(w)$ and $\orz(w)$ defined below were denoted as $\propto^{l<l}_{w @s}$ and $\propto^{l\le l}_{w @s}$, and the relations $\orep(w)$, $\orzp(w)$ and $\ordp(w)$ were denoted there as $\propto^{l<l}_{w @p}$, $\propto^{l\le l}_{w @p}$ and $\propto^{l<f}_{w @p}$. Finally, we mention that in \cite{Wolf2020} the case where the configuration before reading the first letter of the word is not included in the definitions of $\first(q, w, S)$ and $\last(q, w, S)$ was also considered. However, this variant does not change the complexity of the problems that we consider, and hence we omit its discussion.

In the following definitions, let $\Aa = (Q, \Sigma, \delta)$ be a DFA and let $p, q \in Q$ be two its distinct states. For all the traversal relations defined below we do not include the pair $(p, p)$ for any $p \in Q$.

\paragraph*{Traversal relations on sets.}  Given a word $w \in \Sigma^*$, we define the following traversal relations~$\mathrm{R}(w)$:
	$$(p, q) \in \ore(w) \Leftrightarrow
	\last(p, w, Q) < \last(q, w, Q) \textmd{ (traversal relation $\ell < \ell$ on sets),}$$
	 $$(p, q) \in  \orz(w) \Leftrightarrow 
	\last(p, w, Q) \leq \last(q, w, Q) \textmd{ (traversal relation $\ell \le \ell$ on sets).}$$

The second traversal relation differs from the first one in the sense that it does not say that $q$ is active strictly after the moment when $p$ is active for the last time, and instead they can disappear simultaneously. It is easy to see that for any word $w$ we have~$\ore(w) \subseteq \orz(w)$.

\paragraph*{Traversal relations on paths.} So far, we only introduced traversal relations which are defined by the set of active states as a whole. It did not matter which active state belongs to which path and a state on a path $\tau$ could stand in a relation with a state on some other path $\rho$. However, in most scenarios the fact that we start with the active state set $Q$ only models the lack of knowledge about the \emph{actual} current state, and in reality only one state $q$ is active at any moment of time. Hence, any constraints on the ordering of traversed states should apply to each path separately. Therefore, we introduce the variants of the traversal relations above which are defined on paths rather than on series of state sets:
$$(p, q) \in \orep(w) \Leftrightarrow 
	\forall r \in Q \colon 
	\last(p, w, \{r\}) < \last(q, w, \{r\}) \textmd{ (traversal relation $\ell < \ell$ on paths),}$$
$$(p, q) \in \orzp(w) \Leftrightarrow 
	\forall r \in Q \colon 
	\last(p, w, \{r\}) \leq \last(q, w, \{r\}) \textmd{ (traversal relation $\ell \le \ell$ on paths).}$$

The traversal relations $\orep$ and $\orzp$ differ significantly. For instance, for any pair $(p, q) \in \orep(w)$ every path labeled by $w$ visits $q$, while for $(p, q) \in \orzp(w)$ a path labeled by $w$ may not necessarily visit $q$ if it does not visit $p$ as well. However, for any word $w$ we have~$\orep(w) \subseteq \orzp(w)$.

In a way, restricting the traversal relations described so far with a pair $(p, q)$ in the constraint relation $R$ brings ``positive'' constraints on the future transitions of a word, in the sense that the visit of a state $p$ demands for a later visit of the state $q$. For instance, opening the lid demands closing the lid later in our example in the introduction. We now introduce a traversal relation whose restrictions yield ``negative'' constraints. The last kind of traversal relations demands for a pair of states $(p, q)$ that the first visit of the state $q$ forbids any future visits of the state $p$. Referring again to the example in the introduction, this might be caused by the requirement to not turn the box after opening the lid. This stands in contrast to the previous restrictions where we could make up for a ``forbidden'' visit of the state $p$ by visiting $q$ again.
The traversal relation $\mathrm{R}^{\ell < f}$ will only be considered on paths, since the corresponding definition on sets leads to the empty relation alreafy on the state set $Q$. The definitions of $\ordp$ is as follows:
$$(p, q) \in \ordp(w) \Leftrightarrow \forall r \in Q:
	\last(p, w, \{r\}) < \first(q, w, \{r\}) \textmd{ (traversal relation $\ell < f$ on paths).}$$

Note that $\ordp$ is not transitive. For example, for $R=\{(p,q), (q,r)\}$ a path is allowed to go from $r$ to~$p$ if it has not visited $q$.

There are very few connections between different introduced traversal relations. For every DFA and every word $w$, besides the already mentioned facts that $\ore(w) \subseteq \orz(w)$ and $\orep(w) \subseteq \orzp(w)$, we also have that $\ordp(w) \subseteq \orzp(w)$. No other inclusions hold true for every DFA and every $w$. \Cref{fig-example-relations} provides some examples on the differences between the behavior of various traversal relations.

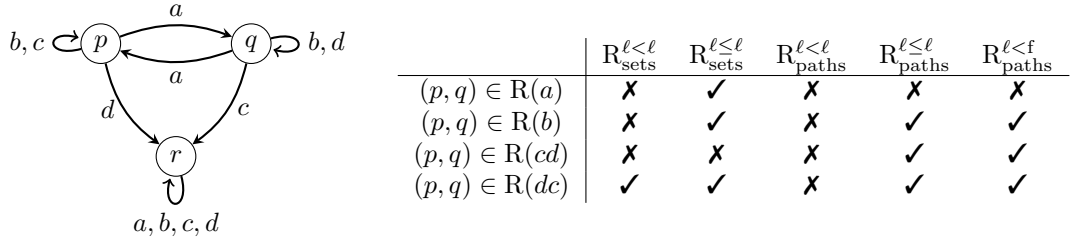
\begin{figure}[ht]\centering
  \begin{minipage}{.25\textwidth}\centering

	\begin{tikzpicture} [node distance = 2cm]
	\tikzset{every state/.style={inner sep=1pt,minimum size=1.5em}}
	
	\node [state] at (0,0) (p) {$p$};
 \node [state] at (2,0) (q) {$q$};
  \node [state] at (1,-1.5) (r) {$r$};
	\path [-stealth, thick]
 (p) edge [bend left=20] node[above] {$a$} (q)
  (q) edge [bend left=20] node[below] {$a$} (p)
 (p) edge [loop left] node {$b, c$} (p)
 (q) edge [loop right] node {$b, d$} (q)
  (p) edge [bend right=20] node[left] {$d$} (r)
  (q) edge [bend left=20] node[right] {$c$} (r)
  (r) edge [loop below] node {$a, b, c, d$} (r)

	;
	\end{tikzpicture}
   \end{minipage}%
  \begin{minipage}{.75\textwidth}
      \centering
      \begin{tabular}{c|c c c c c}

            & $\ore$ & $\orz$ & $\orep$ & $\orzp$ & $\ordp$ \\ \hline
           $(p, q) \in \mathrm{R}(a)$ & \ding{55} & \ding{51} & \ding{55} & \ding{55} & \ding{55} \\ 
           $(p, q) \in \mathrm{R}(b)$ & \ding{55} & \ding{51} & \ding{55} & \ding{51} & \ding{51} \\  
           $(p, q) \in \mathrm{R}(cd)$ & \ding{55} & \ding{55} & \ding{55} & \ding{51} & \ding{51} \\  
           $(p, q) \in \mathrm{R}(dc)$ & \ding{51} & \ding{51} & \ding{55} & \ding{51} & \ding{51} \\  
      \end{tabular}
  \end{minipage}
	\caption{Example of a DFA and a table showing which traversal relations include the pair~$(p, q).$}\label{fig-example-relations}
\end{figure}

\paragraph*{Connection to \LTL{}.} Take the set of states of $\Aa$ as the set of atomic propositions. Then, for traversal relations on paths, each state on a path is labeled only by itself, while for traversal relations on sets each state of $\PP(\Aa)$ is labeled by all the states of $\Aa$ that it includes. Then, all defined orders can be expressed in \LTL{}. 

Indeed, let $R$ be the constraining relation, and let $(p, q) \in R$. Then for $\mathrm{R}^{l \leq l}$ we take the formula $\Glob (p \to \Event q)$, since each occurrence of $p$ requires an occurrence of $q$ in the future. For $\mathrm{R}^{l < l}$ we take $\Event(q \land \Glob \neg p)$, since $q$ must occur at some point, and after the last of its occurrence, there must not be any occurrence of $p$. Finally, for $\mathrm{R}^{l < f}$ we take $ \Glob (q \to \Glob \neg p)$, since after the first occurrence of $q$ there must be no occurrence of $p$. To combine the constraints induced by multiple pairs in $R$, we take the conjunction of the corresponding formulas.

To express the fact that a word is synchronizing in \LTL{}, one can use the following approaches. Pick any state $s$ in a sink maximal strongly connected component of the DFA. On paths, it is enough to add the restriction $\Event \Glob s$.
On sets, we can introduce an atomic proposition that is true for all the states except $s$, and put $\Event \neg s$ as a restriction. We do not use either of these approaches, and instead add a direct separate requirement of synchronizability to the problems, since this allows us to get more precise results on the shape of formulas and the set of atomic propositions. For all our complexity lower bounds we restrict to the setting where the only atomic proposition that is true in a state is the name of the state itself, thus strengthening the lower bounds in \cite{Bertrand2023}, where the same atomic proposition is allowed to be true in a set of states of non-constant length.

\section{Polynomial space upper bounds}\label{sec:in-pspace}
\paragraph*{Generic model checking algorithm.} In this section, we describe polynomial space algorithms for \textsc{\LTL{} model checking on paths (sets)} and their synchronized versions. \textsc{Synchronized \LTL{} model checking on paths} was proved to be in \PSPACE{} in \cite{Bertrand2023}. While most of the ideas are already present there, we need several tweaks of their polynomial space algorithm, hence we choose to provide its brief description in a way that is more convenient for our use. Then, we describe how to adapt it to the three other problems defined in \Cref{subsec:\LTL{}-defs}: \textsc{\LTL{} model checking on sets}, \textsc{\LTL{} model checking on paths}, and
\textsc{Synchronized \LTL{} model checking on sets}. The main issue that the algorithm needs to overcome is that the \LTL{} restrictions are defined on the product DFA (on paths) or the power-set DFA (on sets), which are both of exponential size in terms of the input. However, as we discuss, we do not have to construct them explicitly, and only need to store one their state at a time. 

The idea of the approach in \cite{Bertrand2023} is an extension of the classical automata-based polynomial space algorithm for \LTL{} model checking. The main difference is that the automaton defining the paths now has exponential size, and hence has to be dealt with without constructing it. We provide only a brief description of the algorithm of \cite{Bertrand2023}, and we refer to the proof of Theorem~3 there for full details. 
We work in the following slightly more general setting. Let $P$ be a set of atomic propositions and let $\CC = (Q_\CC, \Sigma, \delta_\CC, i, F)$ be a DFA acceptor with a labelling function $L: Q_\CC \to 2^P$. We think of $\CC$ and $L$ as defined implicitly by a much smaller DFA $\Aa$ provided in the input, the precise constructions will be specified later. The requirements for this implicit specification are that the transition function of $\CC$ and the labeling function $L$ must be computable in polynomial space in the size of $\Aa$. In particular, this means that every state of $\CC$ and the set of labels must be of polynomial size in the size of $\Aa$. 

Given $\CC$ defined in such a way, and an \LTL{} formula $\phi$ over the set $P$ of atomic propositions, we want to decide if there exists a word accepted by $\CC$ and satisfying $\phi$. Given $\CC$ and $\phi$, one can construct a non-deterministic finite automaton $\DD$ with the following property: the language accepted by $\DD$ is non-empty if and only if there exists a word $w \in \Sigma^*$ accepted by $\CC$ such that the path labeled by it in~$\CC$ satisfies $\phi$. To build $\DD$, we first construct the automaton $\BB_\phi$ over the alphabet $2^P$ which recognizes the set of finite traces of $\CC$ satisfying $\phi$, and take the synchronized product of $\CC$ and $\BB_\phi$ as $\DD$, please refer to the proof of Theorem 3 of \cite{Bertrand2023} for technical details. We only need the fact that each state of $\BB_\phi$ is of size polynomial in $\phi$, and its transition function is computable in polynomial time. If $\CC$ satisfies the restrictions described in the previous paragraph, then, by the construction of synchronized product, each state of $\DD$ also has polynomial size and the transition function of $\DD$ is computable in polynomial space.  To check that the language accepted by $\DD$ is non-empty in non-deterministic polynomial space, it is enough to non-deterministically guess a word $w$ accepted by $\DD$ letter by letter. Observe that for each guess we do not need the whole automaton $\DD$ to be constructed explicitly, since we only need to maintain the current state and then compute one of the states obtained by taking a transition labeled by a guessed letter. By Savitch theorem \cite{DBLP:books/daglib/0086373}, \PSPACE{}$=$\NPSPACE{}, and hence this can be done in deterministic polynomial space. 

\paragraph*{Implicitly represented DFAs.} Let now $\Aa = (Q, \Sigma, \delta)$ and $\phi$ be the input of the \textsc{Synchronized \LTL{} model checking on paths} problem, and let $|Q| = n$. Take $\CC$ to be the product of $n$ copies of~$\Aa$. As the initial state of $\CC$, we take an arbitrary $n$-tuple with pairwise different states. The set of accepting states of $\CC$ consists of all the $n$-tuples with the same state at each position. Such automaton recognizes the set of synchronizing words of $\Aa$, and its transition function is computable in polynomial time. As the set of labels, we take a separate label $q_i^{(j)}$ for each state $q_i$ of the $j$th copy of $\Aa$. The labeling function $L$ is then defined as  $L((q_{i_1}, \ldots, q_{i_n})) = \{q_{i_1}^{(1)}, \ldots, q_{i_n}^{(n)}\}$. Finally, for each $j$, define $\phi_j$ to be $\phi$ where each state $q_i$ is substituted with $q_i^{(j)}$, and let $\phi'$ be the conjunction of all $\phi_j$. It is then easy to see that a word $w \in \Sigma^*$ accepted by $\CC$ labels in it a path satisfying $\phi'$ if and only if for every $\pi \in \paths(Q, w)$ we have $\pi \models \phi$. By using the algorithms described above, we get that \textsc{Synchronized model checking on path} is in \PSPACE{}, which was originally proved in \cite{Bertrand2023}.

We now discuss the modifications for other problems. First, we observe that if we make all the states of $\CC$ accepting in the above construction, we get a polynomial space algorithm for \textsc{\LTL{} model checking on paths}.

\begin{proposition}\label{prop:mc-paths-inpspace}
\textsc{\LTL{} model checking on paths} and its synchronized version are in \PSPACE{}.
\end{proposition}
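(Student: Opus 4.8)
The plan is to reuse essentially verbatim the construction already described for \textsc{Synchronized \LTL{} model checking on paths}, dropping only the synchronization requirement. Concretely, given the input DFA $\Aa = (Q, \Sigma, \delta)$ with $|Q| = n$ and the formula $\phi$ over $P = Q$, I would take $\CC$ to be the product of $n$ copies of $\Aa$, with an arbitrary $n$-tuple of pairwise distinct states as the initial state, the same per-copy labels $q_i^{(j)}$ and labeling function $L$, and the same formula $\phi' = \bigwedge_{j=1}^n \phi_j$ where $\phi_j$ is $\phi$ with each $q_i$ replaced by $q_i^{(j)}$. The only change is that now \emph{every} state of $\CC$ is accepting, so that the language recognized by $\CC$ is all of $\Sigma^*$. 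Then a word $w$ is accepted by $\CC$ (trivially, all words are) and labels in $\CC$ a path satisfying $\phi'$ if and only if for every $\pi \in \paths(Q, w)$ we have $\pi \models \phi$; this is exactly the output condition of \textsc{\LTL{} model checking on paths}.

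The next step is to verify that $\CC$ still meets the requirements of the generic algorithm: each state of $\CC$ is an $n$-tuple of states of $\Aa$, hence of polynomial size; the transition function of $\CC$ is computable coordinatewise in polynomial time; the set of labels has size $n^2$ and $L$ is clearly polynomial-time computable; and $\phi'$ has size polynomial in $|\phi|$ and $n$. Since making all states accepting does not affect any of these bounds (it only simplifies the acceptance check to a constant-time ``yes''), the implicit-specification hypotheses of the generic model checking algorithm from the previous part are satisfied. Applying that algorithm to $\CC$, $L$ and $\phi'$ decides in polynomial space whether there exists a word accepted by $\CC$ whose induced path in $\CC$ satisfies $\phi'$, which by the equivalence above settles \textsc{\LTL{} model checking on paths}.

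For the synchronized version, there is nothing new to do: it is precisely the construction recalled from \cite{Bertrand2023} just above the proposition, where the accepting states of $\CC$ are the constant $n$-tuples, so that $\CC$ recognizes the synchronizing words of $\Aa$; the same argument then shows \textsc{Synchronized \LTL{} model checking on paths} is in \PSPACE{}. Hence I would simply remark that both claims follow, the non-synchronized one from the all-states-accepting variant and the synchronized one from the constant-tuple-accepting variant.

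I do not expect a genuine obstacle here, since the heavy lifting — the polynomial-space non-emptiness check for the synchronized product $\DD$ of $\CC$ with $\BB_\phi$, using Savitch's theorem — is already carried out in the generic algorithm. The only point requiring a line of care is the correctness of the translation: that quantifying $\phi$ over all $n$ coordinates of a single path in the product $\CC$ faithfully captures ``$\phi$ holds on every path of $\Aa$ labeled by $w$'', including the treatment of the (arbitrary but pairwise distinct) initial tuple, which is harmless because every state of $\Aa$ appears as some coordinate of the start state and the coordinates evolve independently. This is the same observation already invoked for the synchronized case, so it can be stated briefly rather than reproved.
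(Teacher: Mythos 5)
Your proposal is correct and follows exactly the paper's route: the paper likewise takes the $n$-fold product $\CC$ with the per-copy relabeled conjunction $\phi'$, obtains the synchronized case by taking the constant $n$-tuples as accepting states, and obtains the plain case by the one-line observation that making all states of $\CC$ accepting removes the synchronization requirement. Your additional verification that the implicit-specification hypotheses still hold is a sound (if slightly more explicit) rendering of what the paper leaves implicit.
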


Now, take $\CC$ to be the power-set automaton $\mathcal{P}(\Aa)$. Take the set $Q$ of all the states of $\Aa$ as the initial state of $\mathcal{P}(\Aa)$, take $P = Q$ and define the labeling function $L$ as $L(\{q_1, \ldots, q_k\}) = \{q_1, \ldots, q_k\}$ for $q_1, \ldots, q_k \in Q$. By defining the set of accepting states of $\mathcal{P}(\Aa)$ as, respectively, the set $2^{Q}$ of all its states, or the set of singletons, we get the following result.

\begin{proposition}\label{prop:mc-sets-inpspace}
\textsc{\LTL{} model checking on sets} and its synchronized version are in \PSPACE.
\end{proposition}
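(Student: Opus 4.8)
The plan is to instantiate the generic model checking algorithm described above with $\CC$ taken to be the power-set automaton $\PP(\Aa)$, exactly as set up in the paragraph preceding the statement. The first step is to verify that this choice of $\CC$ meets the requirements of the framework: although $\PP(\Aa)$ has $2^{|Q|}$ states, each individual state is a subset of $Q$ and hence can be stored using $|Q|$ bits; the transition function $\delta'(S, a) = \{\delta(q, a) \mid q \in S\}$ is computable in polynomial time from the input DFA $\Aa$; and the labeling function $L(\{q_1, \ldots, q_k\}) = \{q_1, \ldots, q_k\}$ over $P = Q$ is trivially computable. So $\CC$ and $L$ are implicitly represented in the required sense, and we never build $\PP(\Aa)$ explicitly: we only ever maintain one of its states, of polynomial size, and compute its successors on the fly.

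Next I would handle the two accepting conditions. For \textsc{\LTL{} model checking on sets}, declare every state of $\PP(\Aa)$ accepting; then every word is accepted by $\CC$, so the generic algorithm decides exactly whether there is a word $w$ with $\sets(Q, w) \models \phi$. For the synchronized version, declare the singletons $\{\{q\} \mid q \in Q\}$ to be the accepting states; since $\Aa$ is complete and we start from the state $Q$, a word $w$ is accepted by $\CC$ if and only if $|Q.w| = 1$, i.e. $w$ synchronizes $\Aa$, so the algorithm decides exactly whether there is a synchronizing word $w$ with $\sets(Q, w) \models \phi$. Note that, in contrast to the construction for paths, no renaming of atomic propositions and no conjunction of formulas is needed here: the relevant object is the single run of $\PP(\Aa)$, and $\phi$ is already a formula over $P = Q$, whose truth on that run is exactly what the labeling $L$ records.

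It then remains to invoke the generic algorithm verbatim: construct $\BB_\phi$ over $2^P$, take $\DD$ to be its synchronized product with $\CC$ — whose states remain of polynomial size and whose transition function remains polynomial-space computable by the argument in the first paragraph — and test non-emptiness of $\DD$ by guessing an accepted word letter by letter in \NPSPACE{}, which equals \PSPACE{} by Savitch's theorem. I expect there is essentially no hard step: the only delicate point is precisely the one that dooms the naive approach, namely the exponential number of states of $\PP(\Aa)$, and this is exactly what the implicit-representation framework is designed to absorb, since each state has linear size and its successors are computable without materializing the whole automaton.
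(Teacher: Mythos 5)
Your proposal is correct and follows exactly the paper's own route: instantiate the generic implicitly-represented model checking algorithm with $\CC = \PP(\Aa)$, initial state $Q$, the identity labeling over $P = Q$, and accepting states equal to all of $2^Q$ (resp.\ the singletons) for the plain (resp.\ synchronized) version. You merely spell out the polynomial-size/polynomial-time checks that the paper leaves implicit, which is fine.
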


In particular, as explained in \Cref{subsec-order-defs}, the constraints on all considered traversal relations are expressible as \LTL{} formulas. Hence, we get a uniform proof of Theorem 11 from \cite{Wolf2020}. Moreover, this theorem is true for arbitrary Boolean combination of constraints on different traversal relations.

\begin{theorem}[\cite{Wolf2020}]
	For all traversal relations $\mathrm{R}(w) \in \{\ore, \orep, \orz, \orzp, \ordp\}$, the problem \textsc{Constrained-$\mathrm{R}(w)$-Sync} is contained in \PSPACE.
\end{theorem}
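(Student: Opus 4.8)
The plan is to reduce each problem \CS{\mathrm{R}(w)} to an instance of \textsc{Synchronized \LTL{} model checking on paths} or \textsc{on sets}, and then invoke \Cref{prop:mc-paths-inpspace} and \Cref{prop:mc-sets-inpspace}. The key observation, already set up in the paragraph ``Connection to \LTL{}'' in \Cref{subsec-order-defs}, is that for a fixed constraint relation $R \subseteq Q^2$, the requirement $R \subseteq \mathrm{R}(w)$ is equivalent to the conjunction over all pairs $(p,q) \in R$ of an \LTL{} formula over the atomic proposition set $P = Q$: for the ``$\le$''-type relations we use $\bigwedge_{(p,q)\in R} \Glob(p \to \Event q)$, for the ``$<$''-type relations $\bigwedge_{(p,q)\in R} \Event(q \land \Glob \neg p)$, and for $\ordp$ we use $\bigwedge_{(p,q)\in R} \Glob(q \to \Glob \neg p)$. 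Since $|R| \le |Q|^2$, the resulting formula $\phi_R$ has size polynomial in the size of $\Aa$, so this is a genuine polynomial-time (hence polynomial-space) many-one reduction.

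The remaining point is to match the quantification structure of each traversal relation to the correct model-checking problem. For the ``on paths'' relations $\orep, \orzp, \ordp$, the definitions quantify universally over the starting state $r \in Q$ and impose a condition on each individual path $\paths_\Aa(r,w)$; this is exactly captured by requiring every path in $\paths(Q,w)$ to satisfy $\phi_R$, i.e.\ by an instance of \textsc{Synchronized \LTL{} model checking on paths}. For the ``on sets'' relations $\ore, \orz$, the defining condition compares $\last(p,w,Q)$ and $\last(q,w,Q)$, where these quantities are computed over the whole active set; this is exactly the semantics of $\phi_R$ evaluated on the single path $\sets(Q,w)$ in $\PP(\Aa)$, under the labeling $L(S) = S$, so we get an instance of \textsc{Synchronized \LTL{} model checking on sets}. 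In both cases the extra synchronization requirement in \CS{\mathrm{R}(w)} is carried over verbatim to the synchronized version of the model-checking problem, so no additional work is needed there.

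I would then carry out the argument in four short steps: (i) fix $\mathrm{R}(w)$ and recall the corresponding formula template from \Cref{subsec-order-defs}; (ii) given an input $(\Aa, R)$, build $\phi_R$ as the conjunction over $(p,q) \in R$ of the appropriate template instance, and note $|\phi_R| = \mathcal{O}(|R|) = \mathcal{O}(|Q|^2)$; (iii) verify the equivalence ``there is a synchronizing $w$ with $R \subseteq \mathrm{R}(w)$'' $\Leftrightarrow$ ``$(\Aa, \phi_R)$ is a yes-instance of \textsc{Synchronized \LTL{} model checking on paths} (resp.\ \textsc{on sets})'', which is immediate from unwinding the \LTL{} semantics against the definitions of $\first$ and $\last$; (iv) conclude by \Cref{prop:mc-paths-inpspace} / \Cref{prop:mc-sets-inpspace} and closure of \PSPACE{} under polynomial-time reductions.

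The only place that needs genuine care — and hence the main (minor) obstacle — is step (iii) for the ``on paths'' relations: one must check that the per-path \LTL{} semantics really matches the universally-quantified-over-$r$ condition in the definitions of $\orep, \orzp, \ordp$, including the corner cases where a state never appears on some path (the conventions $\first = +\infty$, $\last = -\infty$, and the stipulated inequalities $-\infty \le -\infty$, $+\infty \le +\infty$). For instance, $\orzp$ demanding $\last(p,w,\{r\}) \le \last(q,w,\{r\})$ for a path that visits neither $p$ nor $q$ must correspond to $\Glob(p \to \Event q)$ holding vacuously on that path, and the case where the path visits $q$ but not $p$ (giving $-\infty \le \last(q,\dots)$) must also be consistent; these checks are routine but should be spelled out once. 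Everything else is bookkeeping.
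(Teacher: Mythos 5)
Your proposal is correct and follows the paper's own argument exactly: the paper likewise derives this theorem by expressing each traversal constraint as the corresponding \LTL{} formula from \Cref{subsec-order-defs} (the same three templates you list, conjoined over the pairs in $R$) and then invoking \Cref{prop:mc-paths-inpspace} and \Cref{prop:mc-sets-inpspace} for the synchronized model-checking problems on paths and on sets, respectively. The corner-case checks you flag in step (iii) are indeed the only part requiring care, and the paper treats them as routine as well.
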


Moreover, the described algorithm can be easily adapted to require the word we are looking for to have given rank, or bring the whole set of states to a particular subset (or a family of subsets), or even come from a given fixed regular language. It is also easy to see that all these restrictions can be implemented in the setting where we are also given a subset $S \subseteq Q$ in the input, and we need to check if there exists a word $w$ that for every $\pi \in \paths(S, w)$ we have $\pi \models \phi$ (respectively, for the path $\pi = \sets(S, w)$ we have $\pi \models \phi$). This word $w$ can be additionally required to synchronize $S$, that is, to have $|S.w| = 1$. Finally, we remark that even if $\Aa$ is not deterministic, without any changes, the described polynomial space algorithm still finds a word satisfying the given \LTL{} constraints on paths or on sets.

\section{\PSPACE{}-complete constraints}\label{sec:pspace-c}
In this section, we give two constructions which provide \PSPACE{}-hardness for most of the constraints that we introduced, one for traversal relations on sets and one for traversal relations on paths. As discussed at the end of \Cref{subsec-order-defs}, these constraints can be expressed by \LTL{} formulas, hence as a consequence we get \PSPACE{}-hardness of model checking on path and sets, as well as of their synchronized versions.

\subsection{Constraints on sets}

We start with a reduction from the \PSPACE{}-complete problem \textsc{Careful Sync} \cite{DBLP:conf/wia/Martyugin12,DBLP:journals/mst/Martyugin14}  to \CS{\orz}, and then extend the construction to other traversal constraints.

\begin{problem*}\label{prob-subset-sync-under}
  \textsc{Careful Sync} \\
  \textbf{Input:} A partial DFA $\Aa=(Q, \Sigma, \delta)$. \\
  \textbf{Output:} Yes, if and only if there exists a word $w\in \Sigma^*$, such that $w$ is defined for all $q\in Q$ and $|Q.w| = 1$.
\end{problem*}

\begin{theorem}
	\label{thm:nonstr-on-sets}
	\CS{\orz} and \CS{\ore} are \PSPACE-complete, even for $|R| = 1$ and $|\Sigma| = 2$. \CW{\ore} is \PSPACE-complete, even for $|R| = 2$ and $|\Sigma| = 3$.
\end{theorem}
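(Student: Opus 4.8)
The plan is to reduce from \textsc{Careful Sync} on a partial DFA $\Aa = (Q, \Sigma, \delta)$. The membership in \PSPACE{} follows from the results of \Cref{sec:in-pspace} (the constraints on $\orz$ and $\ore$ are expressible as fixed-size \LTL{} formulas when $|R|=1$, and $\CW{\ore}$ likewise). So I would focus entirely on the \PSPACE{}-hardness. The central idea is to \emph{complete} the partial DFA while encoding the ``definedness'' requirement of careful synchronization into a traversal constraint of the form $R = \{(p,q)\}$. Concretely, I would add two fresh states $p$ (a ``sink/trap'' that records that a forbidden undefined transition was taken) and $q$ (a state whose visit is needed to satisfy the constraint, and which is placed so that it can be reached only through genuine synchronizing behaviour of the original automaton). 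Every undefined transition $\delta(r, a)$ is redirected to $p$, and $p$ is made a self-loop on every letter, so once a path falls into $p$ it stays there. The constraint $(p,q) \in \orz(w)$ (i.e.\ $\last(p, w, Q') \le \last(q, w, Q')$, where $Q'$ is the new state set) then forces: either $p$ is never visited at all (meaning $w$ uses only originally-defined transitions from every state), or the last visit to $q$ happens no earlier than the last visit to $p$ — but since $p$ is an absorbing sink, if it is ever entered it is visited at the last position $|w|$, so $q$ would also have to be visited at position $|w|$, which I would make impossible by construction. Thus the constraint is satisfiable only by words defined everywhere on $Q$.

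The remaining work is to arrange that $q$ gets visited (so that $(p,q) \in \orz(w)$ actually holds, recall $\last(q,\cdot) = -\infty$ if $q$ never appears, and $-\infty \le -\infty$ holds, so I need to be slightly careful: if $p$ is never visited then $\last(p,\cdot) = -\infty \le \last(q,\cdot)$ holds automatically, which is fine) \emph{and} that the synchronizing requirement of the constructed instance lines up with careful synchronization of $\Aa$. For $\CS{\orz}$ I want: there is a word $w$ carefully synchronizing $\Aa$ iff there is a word $w'$ synchronizing the new automaton with $(p,q) \in \orz(w')$. I would add a fresh letter or reuse an existing one to, after carefully synchronizing $Q$ to a single state, route that state (and also $p$, in case we are in the ``bad'' branch — no, we want the bad branch to fail) to a global sink that all states collapse to; $q$ can be chosen as that sink, or placed on the unique path into it. The delicate point is making the two branches genuinely exclusive: the honest branch must be able to synchronize and visit $q$ last, and the dishonest branch (touching $p$) must be unable to ever satisfy $(p,q) \in \orz$. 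Making $p$ an absorbing self-loop and ensuring $q \ne p$ with no transition out of the sink back through $p$ achieves this. Then $\CS{\ore}$ follows from essentially the same construction with a minor tweak: for $\ore$ we need the \emph{strict} inequality $\last(p,w,Q') < \last(q,w,Q')$, so the honest branch must visit $q$ strictly after any visit to $p$; since in the honest branch $p$ is never visited, $\last(p,\cdot) = -\infty < \last(q,\cdot)$ as long as $q$ \emph{is} visited, so I must make sure the honest synchronizing word does visit $q$ — easily arranged by making $q$ the synchronization target. Here one must double-check the boundary convention: $-\infty < -\infty$ does \emph{not} hold, so if $q$ were never visited the $\ore$ constraint would fail even with $p$ unvisited; hence for $\ore$ it is essential that $q$ lies on every honest path, which is why taking $q$ as the global sink works.

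For $\CW{\ore}$ — the variant without the synchronization requirement — the subtlety is that the empty word or short words might trivially fail to touch $p$ and also fail to touch $q$, so we cannot use a single pair. This is why the statement allows $|R| = 2$ and $|\Sigma| = 3$: I would add a \emph{second} constraint pair, say $(q, q')$ with a third fresh state $q'$ and a third letter, designed to force the word to actually perform the careful-synchronization work (e.g.\ $q'$ is only reachable after genuinely collapsing $Q$, and $(q,q') \in \ore$ forces $q'$ to be visited strictly after $q$, which in turn forces $q$ to be visited, which forces avoiding $p$ throughout). The extra letter is used to move the collapsed state to $q'$ at the very end. The main obstacle, and where I would spend the most care, is exactly this interplay between the ``$-\infty$'' boundary conventions and the strict-vs-nonstrict inequality: ensuring that in the intended (careful-synchronizing) run the relevant states are visited in the right strict order, while in \emph{every} unintended run (one that takes an undefined transition, hence enters $p$) the constraint provably fails, with no loophole coming from states being simultaneously visited at the final position. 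After nailing down the gadget, verifying correctness is a routine case analysis: (i) a careful synchronizing word for $\Aa$ extends to a word satisfying all requirements of the new instance; (ii) conversely, any word satisfying the new instance never enters $p$, hence is defined everywhere on $Q$, and its restriction/prefix carefully synchronizes $\Aa$.
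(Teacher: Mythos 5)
There is a genuine gap in your central gadget. The traversal relations $\orz$ and $\ore$ are evaluated on sets starting from the \emph{full} state set of the constructed automaton, so your fresh trap state $p$ is itself active at position~$0$; and since you make $p$ absorbing (a self-loop on every letter), it is active at \emph{every} position, i.e.\ $\last(p,w,Q') = |w|$ for every word $w$, not only for words that take an undefined transition. The constraint $(p,q)\in\orz(w)$ therefore always demands that $q$ be active at the very last position. Worse, because $p$ is a fixed point of every letter, the only state the automaton can possibly synchronize to is $p$ itself, so at the end only $p$ is active and $q\neq p$ cannot be: your constructed instance is unsatisfiable for \emph{every} input, and the reduction collapses. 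The branch of your argument ``either $p$ is never visited at all'' is vacuous, since $p$ belongs to the initial active set and is visited at time~$0$ by definition.

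The paper's construction works the other way around. Undefined transitions are redirected to a state $q_{\circleddash}$, and a second fresh state $r$ is given \emph{no incoming transitions}, so that $\last(r,w,Q')=0$ is pinned for every $w$; the constraint $(q_{\circleddash},r)\in\orz(w)$ then forces $\last(q_{\circleddash},w,Q')\le 0$, i.e.\ $q_{\circleddash}$ is never active after the initial configuration, which is exactly the careful-synchronization condition. Crucially, neither $q_{\circleddash}$ nor $r$ is absorbing: both copy the outgoing transitions of an arbitrary fixed state $t\in Q$, so they merge into $Q$ after one letter and do not obstruct synchronization. For $\ore$ the paper extends the lifetime of $r$ by one step via copies $\hat q$ of all states except $q_{\circleddash}$, and for \CW{\ore} it adds a pair $(q^1_{\oplus},q^2_{\oplus})$ together with a letter $c$ that can only be safely applied once the automaton is already synchronized; your sketch of a second pair and a third letter is in this spirit, but it cannot be salvaged without first repairing the core gadget along these lines.
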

\begin{proof}
Membership in \PSPACE{} comes from \Cref{prop:mc-sets-inpspace} and the fact that constraints on $\orz$ and $\ore$ can be expressed in \LTL{}. We first show \PSPACE-hardness of \CS{\orz}. Let $\Aa=(Q, \Sigma, \delta)$ be a partial DFA in the input of \textsc{Careful Sync}. We construct a complete DFA $\Aa'= (Q', \Sigma, \delta')$ with $Q' = Q\cup \{q_{\circleddash}, r\}$ such that $q_{\circleddash}, r \notin Q$. We define the constraint relation $R$ as $R = \{(q_{\circleddash}, r)\}$. The idea is to define the transition function of $\Aa'$ so that after reading any non-empty word, it is no longer possible to visit the state $r$, and since $(q_{\circleddash}, r) \in R$, it is also not possible to visit~ $q_{\circleddash}$. We then define all the transitions undefined in $\Aa$ to go to $q_{\circleddash}$ in $\Aa'$. Thus we get that for every word $w$ we have that $\orz(w)$ agrees with $R$ and $w$ is synchronizing for the $\Aa'$ if and only if $w$ is carefully synchronizing for the original automaton $\Aa$. See \Cref{fig-sets} for an illustration.

Formally, the transition function $\delta'$ is defined as follows. We take $\delta'(q, a) = \delta(q, a)$ for all the pairs $q \in Q, a \in \Sigma$ where $\delta$ is defined. For those pairs $(q, a)$ where $\delta$ is not defined, we set $\delta'(q, a) = q_{\circleddash}$.
	Further, fix some arbitrary state $t\in Q$ and for all $a \in \Sigma$ set $\delta'(q_{\circleddash}, a) = \delta'(t, a)$ (note that this can be the state $q_{\circleddash}$ itself) and $\delta'(r, a) = \delta'(t, a)$.
	
\begin{figure}[ht]\centering
	\begin{tikzpicture} [node distance = 2cm]
	\tikzset{every state/.style={inner sep=1pt,minimum size=1.5em}}
	
	\node [state] at (4,0) (q2) {};
 \node [state] at (4,-1) (q3) {};
 \node [state] at (4,1) (t) {$t$};
 
 \node [state, dotted] at (5,2) (ht) {$\hat{t}$};
 
 \node [state] at (3,0) (s) {$s$};
 \node [state] at (3,-1) (ns) {};

 \node [state, dashed] at (0.5,0) (qp1) {$q_\oplus^1$};
 \node [state, dashed] at (-1.3,0) (qp2) {$q_\oplus^2$};

	\node [state] at (7.3,-1) (qc) {$q_{\circleddash}$};
	\node [state] at (7.3, 1) (r) {$r$};

 \node [state, dotted] at (8.3, 2) (hr) {$\hat{r}$};
	
	\draw (3.5,0) ellipse (1.5cm and 1.7cm);
	
	\path [-stealth, thick]
 (q3) edge [bend right=30] node[below] {} (qc)
  (q2) edge [bend left=20] node[below] {} (qc)
	(t) edge node[left] {$b$} (q2)
 	(qc) edge[bend left=20] node[below] {$b$} (q2)
  (r) edge[bend right=20] node[above] {$b$} (q2)
  (t) edge[bend left=20] node[above] {$a$} (qc)
 	(qc) edge [loop below] node {$a$} (qc)
  (r) edge node[right] {$a$} (qc)

  (s) edge[dashed] node[below] {$c$} (qp1)
  (qp1) edge[dashed] node[below] {$c$} (qp2)
  (qp1) edge[dashed, loop below] node[below] {$a, b$} (qp1)
  (qp2) edge[dashed, loop below] node[below] {$a, b, c$} (qp2)

  (ns) edge[bend right=40, dashed] node[below] {$c$} (qc)
  (r) edge[bend left=60, dashed] node[right] {$c$} (qc)

  (qc) edge [loop right, dashed] node {$c$} (qc)

  (hr) edge[dotted] node[right] {} (r)
  (ht) edge[dotted] node[right] {} (t)
	;
	\end{tikzpicture}
	\caption{Illustration of the constructions in the proof of \Cref{thm:nonstr-on-sets}. The construction for $\orz$ is depicted by solid lines, its extension to $\ore$ by dotted lines, and the further extension to drop the synchronizability requirement by dashed lines. For every state $q$ except $q_\circleddash$, we have a dotted state~$\hat{q}$, some of them are omitted in the picture for simplicity. Unmarked states and transitions are only drawn for illustrative purposes and their labels do not matter for the proof.}
 \label{fig-sets}
\end{figure}

Assume that there exists a word $w \in \Sigma^*$, $|w| = n$, such that for all states $q \in Q$,  $\delta(q, w)$ is defined, and $|\delta(Q, w)| = 1$.
	In particular, $\delta(q, w[1])$ is defined for all states $q \in Q$.
Let us consider how the letter $w[1]$ acts on the states of $\Aa'$. First,  $\delta'(r, w[1]) = \delta'(q_\circleddash, w[1]) = \delta(t, w[1])$, and $\delta(t, w[1])$ is defined by our assumption. Second, $\delta'(Q, w[1]) \subseteq Q$ since $\delta(q, w[1])$ is defined for all states $q \in Q$. Hence, $\delta'(Q', w[1]) \subseteq Q$. Furthermore, by construction of $\delta'$, we have $\delta'(Q', w[1]) = \delta(Q, w[1])$. Let $w = w[1]w'$. 
	By our assumption, $\delta(q, w')$ is defined for every $q \in \delta(Q, w[1])$, and by construction $\delta'(q, w') = \delta'(q, w')$ for every $q \in \delta(Q, w[1])$. In particular, this means that while reading $w'$ in $\Aa'$ starting from the states in $\delta'(Q', w[1])$, 
	the state $q_\circleddash$ is not visited,
	and $\delta'(Q', w) = \delta(Q, w)$. 
	Therefore,~$w$ also synchronizes the automaton~$\Aa'$. 
	The state $q_\circleddash$ is 
	only active in the start configuration where no letter of $w$ is read yet, and after that it is not active anymore while reading $w$. The same is true for~$r$, hence $R = \{(q_\circleddash, r)\} \subseteq\ \orz(w)$. 
	
	In the opposite direction, assume there exists a word $w \in \Sigma^*$, $|w| = n$, that synchronizes~$\Aa'$ with $(q_\circleddash, r) \in \orz(w)$.
	The only position of $w$ in which $r$ is active in $\Aa'$ due to the definition of $\delta'$ is before any letter of $w$ is read. 
	As $(q_\circleddash, r)\in\ \orz(w)$, it holds for all $i$, $1 \le i \le n$, that $q_\circleddash \notin\delta'(Q', w[1..i])$. Hence, $\delta'(q, w)$ is defined for every state $q \in Q$. Since $\delta'$ and $\delta$ agree on the definition range of~$\delta$, it follows that $w$ also synchronizes the state set $Q$ in $\Aa$ without using an undefined transition.

 To adapt the construction for $\ore$, we 
	introduce a copy $\hat{q}$ of every state $q \in Q \cup \{r\}$ and set $\delta'(\hat{q}, a) = q$ for every $a \in \Sigma$, $q \in Q\cup \{r\}$. 
	We keep $R = \{(q_\circleddash, r)\}$. Since for any word $w\in \Sigma^*$ with $|w|\geq 2$, we have that $r$ is no longer active after reading $w[2]$, in order for $\ore(w)$ to agree with $R$,
	the state~$q_\circleddash$
	needs to never be active after reading $w[1]$ in $w$. Note that the state $q_\circleddash$ was not copied.

Denote the modified DFA (used for $\ore$) as $\Aa'$ again. To show that \CW{\ore} is \PSPACE{}-complete, we further modify this construction so that the requirement on the word to be synchronizing for $\Aa'$ can be dropped. Add a new letter $c$ to the alphabet $\Sigma$, and add two new states $q^1_{\oplus}, q^2_{\oplus}$ to $Q'$. Observe that if $\Aa$ is carefully synchronizing, then every carefully synchronizing word maps all the states of $\Aa$ to the same maximal strongly connected component of the underlying digraph of $\Aa$. More generally, if $\Aa$ is carefully synchronizing, then the underlying digraph of $\Aa$ can only contain one maximal strongly connected component $C\subseteq Q$ that is a sink. For the sake of contradiction, assume that the underlying digraph of $\Aa$ contains two maximal strongly connected components $C_1$ and $C_2$ that are both sinks. Then, no word can synchronize the set $C_1 \cup C_2$ as no state outside of $C_1 \cup C_2$ is reachable from any state in $C_1 \cup C_2$, and the states in $C_1$ cannot be mapped into $C_2$ and vice versa. Observe that for a digraph, all the maximal strongly connected components that are sinks can be found in polynomial time.

Let $C$ be the single maximal strongly connected component of $\Aa$ that is a sink.
Pick a state $s$ in this component. Define all the letters in $\Sigma$ to act as the identity on $q^1_{\oplus}$ and $q^2_{\oplus}$. Furthermore, define $\delta'(s, c) = q^1_{\oplus}$, $\delta'(q^1_{\oplus}, c) = q^2_{\oplus}$, $\delta'(q^2_{\oplus}, c) = q^2_{\oplus}$ and $\delta'(q, c) = q_\circleddash$ for all $q \in Q' \setminus \{s, q^1_{\oplus}, q^2_{\oplus}\}$. Define $R' = R \cup\{(q^1_{\oplus}, q^2_{\oplus})\}$. As shown above, if $R'$ is respected, the state $q_\circleddash$ is never active after reading the first letter of any word.
Observe that since $(r, q_\circleddash) \in R'$, letter $c$ cannot be applied if any state in $Q'$ other than $s$ is active. However, it has to be applied since $(q^1_{\oplus}, q^2_{\oplus}) \in R'$, and applying any letter in $\Sigma$ preserves the fact that both $q^1_{\oplus}$ and $q^2_{\oplus}$ are active. Hence, for a word $w \in \Sigma^*$, $\ore(wcc)$ agrees with $R'$ if and only if $w$ is synchronizing for $\Aa'$ and $\ore(wcc)$ agrees with $R$. Hence, we proved that \CW{\ore} is \PSPACE{}-complete.
\end{proof}

Together with \Cref{prop:mc-sets-inpspace}, we get the following.

\begin{theorem}\label{thm:mc-sets-main}
\textsc{Model checking on sets} and its synchronized version are \PSPACE-complete, even for a fixed \LTL{} formula and a DFA over constant-size alphabet.
\end{theorem}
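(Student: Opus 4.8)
The plan is to get the upper bound for free from \Cref{prop:mc-sets-inpspace} and to derive \PSPACE{}-hardness by packaging the reductions of \Cref{thm:nonstr-on-sets} together with the translation of traversal constraints into \LTL{} formulas from \Cref{subsec-order-defs}. The one subtlety is that we need the resulting \LTL{} formula to be \emph{fixed}, i.e.\ independent of the input instance, even though its atomic propositions are state names.

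For the synchronized version, I would reduce from \CS{\orz}, which by \Cref{thm:nonstr-on-sets} is \PSPACE{}-hard already with $|R| = 1$ and $|\Sigma| = 2$. The single constraint pair produced by that reduction is always $(q_\circleddash, r)$, where $q_\circleddash$ and $r$ are the two fresh states added to the input automaton; so I would simply rename them by two fixed symbols, say $\mathtt{x}$ and $\mathtt{y}$, chosen once and for all. Using the labeling $L(S) = S$ of the power-set automaton (which is exactly the labeling under which the translation of \Cref{subsec-order-defs} is stated), we have $R \subseteq \orz(w)$ if and only if $\sets(Q', w) \models \Glob(\mathtt{x} \to \Event \mathtt{y})$. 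Mapping an instance $(\Aa', R)$ of \CS{\orz} to $(\Aa', \Glob(\mathtt{x} \to \Event \mathtt{y}))$ is therefore a correct reduction to \textsc{Synchronized \LTL{} model checking on sets}: the formula is fixed, the alphabet has size two, and the built-in synchronization requirement of the target problem matches the one in \CS{\orz}. (One can equally well start from \CS{\ore} with the fixed formula $\Event(\mathtt{y} \land \Glob \neg \mathtt{x})$.)

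For the non-synchronized \textsc{\LTL{} model checking on sets}, I would instead reduce from \CW{\ore}, which by \Cref{thm:nonstr-on-sets} is \PSPACE{}-hard with $|R| = 2$ and $|\Sigma| = 3$. The two constraint pairs there always involve the four fresh states $q_\circleddash, r, q^1_\oplus, q^2_\oplus$, which I would rename by fixed symbols $\mathtt{x}, \mathtt{y}, \mathtt{z}_1, \mathtt{z}_2$. The fixed formula is then the conjunction $\phi = \Event(\mathtt{y} \land \Glob \neg \mathtt{x}) \land \Event(\mathtt{z}_2 \land \Glob \neg \mathtt{z}_1)$ of the two $\ell < \ell$-on-sets encodings, and $\sets(Q', w) \models \phi$ iff $R \subseteq \ore(w)$; the alphabet has size three.

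The main thing to watch is the fixedness claim: I need to verify that the auxiliary states introduced by the reductions of \Cref{thm:nonstr-on-sets} can be given instance-independent names, so that across all inputs the output formula is literally the same, and that the semantics of $\orz$ and $\ore$ on the power-set automaton with $L(S) = S$ indeed coincide with the stated \LTL{} formulas. I would also sanity-check that the non-synchronized target problem is not trivialized by $w = \epsilon$: on the one-state path $\sets(Q', \epsilon)$ the proposition $\mathtt{x}$ holds (since $q_\circleddash$ is a state of $\Aa'$), so $\Glob \neg \mathtt{x}$ fails and $\phi$ is not satisfied, exactly as it should be. Beyond these checks, the argument is routine bookkeeping.
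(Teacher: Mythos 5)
Your proposal is correct and follows essentially the same route as the paper: membership comes from \Cref{prop:mc-sets-inpspace}, and hardness for fixed formulas comes from the reductions of \Cref{thm:nonstr-on-sets} combined with the \LTL{} encodings of the traversal constraints from \Cref{subsec-order-defs}, with the fresh states given instance-independent names. Your formula $\Event(\mathtt{y} \land \Glob \neg \mathtt{x}) \land \Event(\mathtt{z}_2 \land \Glob \neg \mathtt{z}_1)$ for the non-synchronized version (via \CW{\ore}) is in fact the careful choice here, since a $\Glob(p \to \Event q)$-style formula coming from $\orz$ is satisfied by the empty word on sets and so cannot witness hardness once the synchronization requirement is dropped.
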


An example of a fixed formula for which \textsc{Model checking on sets} is \PSPACE{}-complete is $\Glob (q_1 \to \Event q_2)$. For \textsc{Synchronized model checking on sets} an example of such a formula is $\Glob (q_1 \to \Event q_2) \land \Glob (q_3 \to \Event q_4)$.

\subsection{Constraints on paths} \label{subsec:pspace-path}

Observe that $\orzp$ and $\ordp$ express similar properties: in both cases, if $(p, q) \in R$, then every path visiting $p$ must visit $q$ later. This allows us to use almost the same construction to show \PSPACE{}-hardness for both of them. For \CS{\ordp}, this was already proved in \cite{Bertrand2023}, but the alphabet in their construction grows with the number of states of the DFA. We show, in particular, that for constant-size alphabet this problem remains \PSPACE{}-hard.

\begin{theorem}\label{thm:pspace-on-paths}\CS{\orzp}, \CS{\ordp} and \CW{\orzp} are \PSPACE{}-complete, even over a constant-size alphabet.
\end{theorem}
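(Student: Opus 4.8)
The plan is to reduce from \textsc{Careful Sync} again, reusing the philosophy of the ``sets'' construction but now making the forcing argument work path-by-path rather than on the whole active set. First I would take a partial DFA $\Aa = (Q, \Sigma, \delta)$ as input and build a complete DFA $\Aa'$ on $Q' = Q \cup \{q_\circleddash, r\}$ exactly as before: redirect every undefined transition of $\Aa$ to the dead state $q_\circleddash$, send $q_\circleddash$ and $r$ to $\delta(t, a)$ for a fixed $t \in Q$ (so that after one letter neither $r$ nor, if we respect the constraint, $q_\circleddash$ is ever active on any path), and set $R = \{(q_\circleddash, r)\}$. For $\orzp$ the key observation is that $(q_\circleddash, r) \in \orzp(w)$ means: \emph{every} path that visits $q_\circleddash$ must visit $r$ at a position that is $\ge$ the last visit of $q_\circleddash$; but $r$ is only active at time $0$ on the single path starting in $r$, so a path visiting $q_\circleddash$ at any time $\ge 1$ can never catch up. Hence respecting $R$ forces $q_\circleddash$ to be unreachable after the first letter on every path, which is exactly the condition that $w$ uses no undefined transition of $\Aa$; together with synchronization this gives the equivalence with careful synchronizability. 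The same construction works verbatim for $\ordp$, since $(q_\circleddash, r) \in \ordp(w)$ says the last visit of $q_\circleddash$ precedes the first visit of $r$ on every path, and again $r$'s only visit is at time $0$, so $q_\circleddash$ must never be active after time $0$ on any path.

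Next I would handle the ``drop synchronizability'' variant \CW{\orzp} by the same trick used in \Cref{thm:nonstr-on-sets}: first argue that a carefully synchronizing $\Aa$ has a unique sink strongly connected component $C$ (computable in polynomial time), pick $s \in C$, add a fresh letter $c$ and fresh states $q^1_\oplus, q^2_\oplus$ on which all old letters act as the identity, set $\delta'(s,c) = q^1_\oplus$, $\delta'(q^1_\oplus, c) = q^2_\oplus$, $\delta'(q^2_\oplus,c) = q^2_\oplus$, and $\delta'(q, c) = q_\circleddash$ for all other $q$, and put $R' = R \cup \{(q^1_\oplus, q^2_\oplus)\}$. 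The pair $(q^1_\oplus, q^2_\oplus)$ forces $c$ to eventually be applied (no other letter can separate the last visits of $q^1_\oplus$ and $q^2_\oplus$, since they stay active forever once reached), while the pair $(q_\circleddash, r)$ forces that the configuration in which $c$ is applied has only $s$ active among the old states — i.e.\ the prefix before $c$ must synchronize $\Aa'$ into $s$. So $w c c$ respects $R'$ iff $w$ synchronizes $\Aa'$ and respects $R$, reducing \CS{\orzp} to \CW{\orzp}.

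The main obstacle I anticipate is the shift from sets to paths: for the ``on sets'' proof it sufficed that $r$ disappears from the \emph{active set} after one step, but for paths I must be careful that the argument is genuinely uniform over all $|Q'|$ starting states, including the paths starting in $\hat{}$-type auxiliary states or in $q_\circleddash$ itself. In particular I should double-check the universally-quantified ``$\forall r \in Q$'' in the definitions of $\orzp$ and $\ordp$ against the path starting in $q_\circleddash$ (which, after one letter, behaves like the path starting in $t$) and against the empty word and one-letter words, since those are precisely the degenerate cases where $r$ is still active; the claim that $\CW{\orzp}$ is nontrivial (unlike the empty-word-satisfiable $\orz$ and $\ordp$ cases) hinges on the $q^1_\oplus, q^2_\oplus$ gadget genuinely forcing nonempty behavior. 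I would also note that $\orzp$ and $\ordp$ express the ``positive'' obligation ``every path through $p$ later visits $q$'' in slightly different ways — $\ordp$ additionally forbids revisiting $p$ after $q$ — so I must verify the reverse direction (a careful synchronizing word of $\Aa$ induces a word respecting $R$) handles the $\ordp$ case, which it does trivially because $q_\circleddash$ and $r$ are never active after time $0$ on any path of a good word, so vacuously both relations agree with $R$. Finally, combined with \Cref{prop:mc-sets-inpspace} / \Cref{prop:mc-paths-inpspace} and the \LTL{} encodings of the constraints from \Cref{subsec-order-defs}, this yields the analogue \textsc{Model checking on paths}-hardness statement (\Cref{thm:mc-paths-main}).
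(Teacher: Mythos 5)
There is a genuine gap: the core reduction does not work for either relation, and it breaks precisely at the point you flagged but did not actually check, namely the universal quantification over starting states in the definitions of $\orzp$ and $\ordp$. Consider first $\orzp$ with $R = \{(q_\circleddash, r)\}$. The state $r$ has no incoming transitions, so the only path that ever visits $r$ is the one starting in $r$ itself, and only at position $0$. Now take the path starting in $q_\circleddash$: it visits $q_\circleddash$ at position $0$, so $\last(q_\circleddash, w, \{q_\circleddash\}) \ge 0$, while $\last(r, w, \{q_\circleddash\}) = -\infty$. Hence $(q_\circleddash, r) \in \orzp(w)$ fails for \emph{every} word $w$, and your reduction always produces a no-instance. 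For $\ordp$ the failure is in the opposite direction: the constraint $\last(q_\circleddash, w, \{p_0\}) < \first(r, w, \{p_0\})$ is vacuously true on every path that never visits $r$, because $\first(r, w, \{p_0\}) = +\infty$ and any last-visit position (finite or $-\infty$) is strictly below $+\infty$. Since no path other than the one starting in $r$ can ever reach $r$, the constraint only restricts that single path and says nothing about paths that use undefined transitions of $\Aa$; careless words therefore satisfy it. The underlying problem is structural: the ``deadline state'' trick works on sets because $r$ being active at time $0$ in the set-trace imposes a global deadline on the whole active set, but per-path constraints cannot be triggered by a state that lies on no path except its own. The same defect propagates to your \CW{\orzp} gadget, which relies on the base reduction being sound.

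The paper takes an entirely different route for the path relations: a reduction from \textsc{Finite Automata Intersection} (\PSPACE{}-complete already over binary alphabets). It builds a disjoint union of the acceptors with extra states $y$, $n$, $f$, a reset letter that sends each component to its initial state $s_i$, and a test letter that routes final states to $y$ and non-final states to $n$; the constraints $R_1 = \{(s_i, y)\}$ for $\orzp$ and $R_2 = \{(n, s_i)\}$ for $\ordp$ are placed on states that every relevant path genuinely visits or is steered through, which is exactly what your construction lacks. If you want to salvage your approach, you would need to replace $r$ by a state that every path is forced to visit, at which point you are essentially rebuilding the paper's gadget.
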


\begin{proof}
Membership in \PSPACE{} comes from \Cref{prop:mc-paths-inpspace} and the fact that constraints on $\orzp$ and $\ordp$ can be expressed in \LTL{}. To show \PSPACE-hardness, we reduce from the \textsc{Finite Automata Intersection} problem, which is \PSPACE-complete, even for binary alphabets~\cite{Kozen1977}.

\begin{problem*}\label{prob-automata-intersection}
  \textsc{Finite Automata Intersection} \\
  \textbf{Input:} A set of $m$ DFA acceptors $\Aa_i = (Q_i, \Sigma, \delta_i, s_i, F_i), 1 \le i \le m$. \\
  \textbf{Output:} Yes, if and only if there exists a word in $\Sigma^*$ which is accepted by all  $\Aa_i, 1 \le i \le m$.
\end{problem*}

The idea is as follows. We construct a single DFA which is a union of all the DFA acceptors together with three new states $y, n, f$. We add a letter $r$ which makes only the initial state of each DFA acceptor active. Then, the synchronizability requirement will force all these initial states to be brought to the same state $f$ by a new letter $t$, and the constraints on traversal relations will force them to go through the state $y$ instead of $n$ on the way to $f$, thus guaranteeing that every synchronizing word satisfying the constraints corresponds to a word accepted by every DFA acceptor.

Formally, given an instance of \textsc{Finite Automata Intersection}, we construct the following DFA $\BB = (Q', \Sigma \cup \{r, t\}, \delta')$. We take $Q' = \cup_{i = 1}^m Q_i \cup \{y, n, f\}$. For each $1 \le i \le m$, we define $\delta'(q, a) = \delta_i(q, a)$ for all $q \in Q_i$, $a \in \Sigma$. For the states $y, n, f$ we set every letter of $\Sigma$ to induce a self-loop. It remains to define $\delta'$ for the new letters $r$ and $t$. For each $1 \le i \le m$, the letter $r$ sends all the states in $Q_i$ to $s_i$, and the letter $t$ takes all the states in $F_i$ to $y$, all the states in $Q_i \setminus F_i$ to $n$, and sends the states $n$ and $y$ to $f$. All yet unspecified transitions induce self-loops. The part of the constructed DFA corresponding to $\Aa_1$ and $\Aa_2$ is depicted in Figure \ref{fig-fai} (note that the states $y, n, f$ are shared by all such gadgets).
	
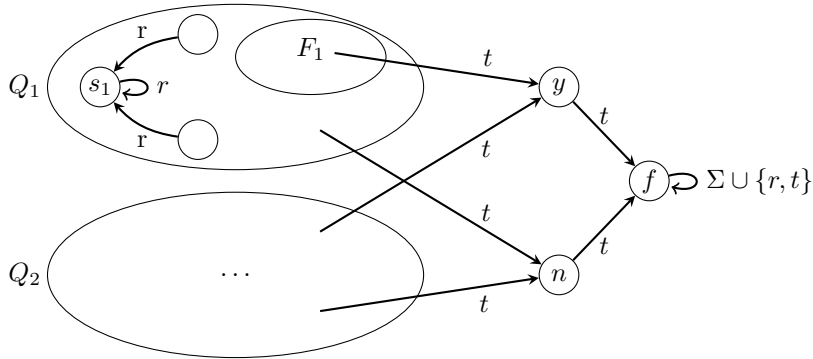
\begin{figure}[ht]\centering
	\begin{tikzpicture} [node distance = 2cm]
	\tikzset{every state/.style={inner sep=1pt,minimum size=1.5em}}
	
	\node [state] at (3,0.7) (q1) {};
	\node [state] at (3,-0.7) (q2) {};
	\node [state] at (1.7,0) (s1) {$s_1$};
	
	\node [state] at (7.8,-2.5) (n1) {$n$};
	\node [state] at (7.8, 0) (y1) {$y$};
	
	\node [state] at (9, -1.25) (f) {$f$};
	
	\draw (3.5,0) ellipse (2.5cm and 1.1cm);
	\draw (4.5,0.4) ellipse (1cm and 0.5cm);
	\node [] at (4.5,0.5) (F1) {$F_1$};
	\node [] at (4.5,-0.5) (NF1) {};
	\node [] at (0.7,0) {$Q_1$};

    \draw (3.5,-2.5) ellipse (2.5cm and 1.1cm);
    \node [] at (3.5,-2.5) {$\ldots$};
    \node [] at (0.7,-2.5) {$Q_2$};
    \node [] at (4.5,-2) (F2) {};
    \node [] at (4.5,-3) (NF2) {};
    
    \path [-stealth, thick]
	(q1) edge [bend right=20] node[above] {r} (s1)
	(q2) edge [bend left=20] node[below] {r} (s1)
	(F1) edge node[above, near end] {$t$} (y1)
	(NF1) edge node[above, near end] {$t$} (n1)
	(n1) edge node[below] {$t$} (f)
	(y1) edge node[above] {$t$} (f)
	(f) edge [loop right] node {$\Sigma \cup \{r, t\}$} (f)
	(s1) edge [loop right] node {$r$} (s1)

 	(F2) edge node[below, near end] {$t$} (y1)
	(NF2) edge node[below, near end] {$t$} (n1)

	;
	\end{tikzpicture}
	\caption{Illustration of the construction in the proof of \Cref{thm:pspace-on-paths}. }\label{fig-fai}
\end{figure}

 Finally, we take $R_1 = \{(s_i, y) \mid 1 \le i \le m\}$ and $R_2 = \{(n, s_i) \mid 1 \le i \le m\}$. We claim that there exists a synchronizing word $w$ for $\BB$ such that $\orzp(w)$ (respectively, $\ordp(w)$) agrees with $R_1$ (respectively, $R_2$) if and only if all the DFAs $\Aa_1, \ldots, \Aa_m$ accept the same word.
 
	In one direction, if there is a word $w$ accepted by $\Aa_1, \ldots, \Aa_m$, then $rwtt$ is a synchronizing word for $\BB$ such that $R_1 \subseteq \orzp(rwtt)$ and $R_2 \subseteq \ordp(rwtt)$. 
	Indeed, since for each $1 \le i \le m$ the word $w$ is accepted by $\Aa_i$, we have that $\delta(q, rwt) = y$ for each $q \in Q_i$. After that, the application of $t$ sends $y$ to $f$ without visiting $n$. Hence, for each $1 \le i \le m$ the path labeled by $rwtt$ and staring in~$s_i$ visits $y$ and then ends in $f$, so it visits $y$ for the last time after its last visit of $s_i$. Hence $rwtt$ is a synchronizing word such that $\orzp(rwtt)$ agrees with $R_1$. Moreover, such path never visits $n$ and hence $\ordp(rwtt)$ agrees with $R_2$.

In the other direction, assume that $w$ is a synchronizing word for $\BB$ such that $\orzp(w)$ (respectively, $\ordp(w)$) agrees with~$R_1$ (respectively, with $R_2$). By construction of $\BB$, since $w$ is synchronizing, for each $1 \le i \le m$ the word $w$ sends $s_i$ to $f$, as $f$ is  a sink.
Observe that the word $w$ must contain at least one occurrence of $t$. Take $w = w_1tw_2$, where $w_1$ does not contain any occurrences of $t$. If $w_1$ contains at least one occurrence of $r$, let $w = w_3rw_4$, where $w_4$ does not contain any occurrences of~$r$. Otherwise, take $w_4 = w_1$. Since $\orzp(w)$ (respectively, $\ordp(w)$) agrees with $R_1$ (respectively, $R_2$), for each $1 \le i \le m$ the word $w_4$ must map $s_i$ to $y$, because otherwise the path labelled by $w$ starting at $s_i$ would visit $n$ instead of $y$, and the states $y$ and $n$ cannot be visited by the same path in $\BB$. Hence, the word $w_4$ maps the initial state of each $\Aa_i$, $1 \le i \le m$, to its final state, and thus it is accepted by each DFA acceptor.

The constraint relation $R_1$ can be extended to force $w$ to be synchronizing. Indeed, add the pairs $(p, f)$ for all states $p \in Q' \setminus \{f\}$ to $R_1$. It is easy to see that in this case every word $w$ such that $\orzp(w)$ agrees with $R_1$ is synchronizing. Hence, \CW{\orzp} is also \PSPACE{}-complete.
\end{proof}	

Finally, the same arguments as in the proof of \Cref{thm:pspace-on-paths} show that satisfying the \LTL{} formula $\Event f \land ((\neg n \land \neg f) \to \Event y)$
is equivalent to the requirement for a word to be accepted by all the DFA acceptors. This is also true for every synchronizing word satisfying $\neg n \to \Glob \neg n$. Hence, together with \Cref{prop:mc-paths-inpspace}, we get the following.

\begin{theorem}\label{thm:mc-paths-main}
\textsc{\LTL{} model checking on paths} and its synchronized version are \PSPACE-complete, even for a fixed $\LLL^+(\Event)$ \LTL{} formula. \textsc{Synchronized \LTL{} model checking on paths} is also \PSPACE-complete for a fixed $\LLL^+(\Glob)$ \LTL{} formula. All these complexity lower bounds remain true even for DFAs over a constant-size alphabet.
\end{theorem}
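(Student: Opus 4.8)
The plan is to obtain the theorem from the reduction already used in the proof of \Cref{thm:pspace-on-paths}. Membership in \PSPACE{} is \Cref{prop:mc-paths-inpspace}, so the task is only to exhibit fixed formulas of the required shapes. I would reuse verbatim the DFA $\BB = (Q', \Sigma \cup \{r, t\}, \delta')$ constructed there from an instance of \textsc{Finite Automata Intersection}: its alphabet has size $|\Sigma| + 2$, and since that problem is \PSPACE{}-hard already for $|\Sigma| = 2$ \cite{Kozen1977}, this gives a constant-size alphabet. What changes is that the conditions on $\orzp(w)$ (imposed there through $R_1$) and on $\ordp(w)$ (imposed through $R_2$) must be replaced by the requirement that a single \LTL{} formula holds on \emph{every} path of $\paths(Q', w)$.

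For \textsc{\LTL{} model checking on paths} and its synchronized version I would use the $\LLL^+(\Event)$ formula $\phi = \Event f \land (n \lor f \lor \Event y)$ (equivalently $\Event f \land ((\neg n \land \neg f) \to \Event y)$). Its correctness rests on two structural facts about $\BB$, which are precisely what made the proof of \Cref{thm:pspace-on-paths} go through: $\{f\}$ is the unique sink of $\BB$, so $\Event f$ holding on every path is equivalent to $w$ synchronizing $\BB$; and from $n$ only $n$ and $f$ are reachable while from $y$ only $y$ and $f$ are reachable, so a path satisfying $\Event y$ never visits $n$. Therefore, if $\phi$ holds on every path, then for each $i$ the path starting in $s_i$ leaves $Q_i$ (which it can only do via the letter $t$) into $y$ rather than $n$; writing $w = w_3 r w_4 t w_2$ with $w_4 \in \Sigma^*$ (or $w = w_4 t w_2$ if there is no $r$ before the first $t$), this forces $\delta_i(s_i, w_4) \in F_i$ for every $i$, so $w_4$ is accepted by all $\Aa_i$. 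Conversely, if $v$ is accepted by all $\Aa_i$, then $w = rvtt$ synchronizes $\BB$ to $f$ and satisfies $\phi$ on every path, by a short case analysis on the starting state (a state of some $Q_i$ reaches $y$ then $f$; the state $y$ already satisfies $y$ and reaches $f$; the states $n$ and $f$ falsify the antecedent of the implication). For the synchronized version the argument is unchanged except that $\Event f$ is then automatic.

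For \textsc{Synchronized \LTL{} model checking on paths} I would additionally use the $\LLL^+(\Glob)$ formula $n \lor \Glob \neg n$ (equivalently $\neg n \to \Glob \neg n$), evaluated at the start of each path. Since $w$ must be synchronizing and $\{f\}$ is the unique sink, $w$ maps every state to $f$; the formula then says that a path starting in $s_i$ (which differs from $n$) never visits $n$, which again forces the $t$-transition leaving $Q_i$ to originate in $F_i$, while for a path starting in $n$ the implication is vacuous. The same extraction of $w_4 \in \Sigma^*$ gives one direction and $w = rvtt$ gives the other. Note that without the synchronization requirement this formula is already satisfied by the empty word, which is exactly why it only yields the synchronized variant. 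Since $\LLL^+(\Event)$ and $\LLL^+(\Glob)$ formulas are in particular \LTL{} formulas, these lower bounds also hold for general fixed \LTL{} formulas.

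The only genuine obstacle I expect is bookkeeping: one has to check that these formulas behave as intended on paths starting from \emph{every} state of $\BB$ — in particular the auxiliary states $y$, $n$, $f$ and the non-initial states of each $Q_i$ — so that the universal path quantifier of \LTL{} model checking on paths matches exactly the traversal-relation conditions used in \Cref{thm:pspace-on-paths}. This is a routine but slightly delicate case analysis and needs no ideas beyond those already in that proof.
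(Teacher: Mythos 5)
Your proposal is correct and matches the paper's own argument exactly: the paper also derives \Cref{thm:mc-paths-main} from the construction of \Cref{thm:pspace-on-paths} using the very same formulas $\Event f \land ((\neg n \land \neg f) \to \Event y)$ and $\neg n \to \Glob \neg n$, with the same observations about $f$ being the unique sink and $y$, $n$ being mutually unreachable. The case analysis you flag as the only delicate point is indeed all that the paper leaves implicit, and your sketch of it is sound.
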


\section{\NP{}-complete constraints}\label{sec:np-c}
In this section, we prove \NP{}-completeness of \CS{\orep}, and explore the connections of the traversal relation $\orep$ with (classical) \LTL{} model checking. Let us describe the intuition of the proof. Observe that in \CS{\orep} only last visits of states matter, and for each $(p, q) \in R$ every path must visit $q$, even if it does not visit $p$. Hence, in some sense not visiting $p$ does not help with satisfying the restrictions. 
One can thus first find a synchronizing word for the DFA, forcing the paths starting from every its state to end in a particular state $f$, and then extend this word to satisfy all the restrictions on the order of last visits starting only from $f$. The latter is captured by a graph-theoretic problem \textsc{Last-Visits Traversal}. To show that this problem is NP-complete, and hence \CS{\orep} is NP-complete too, we look at the equivalent \textsc{First-Visits Traversal} problem, which is obtained by reversing all transitions and all restrictions of \textsc{Last-Visits Traversal}. This problem has a more intuitive nature, since now we are dealing with first visits of states, so it is more easier to control at every moment of time which states can already be visited, and which cannot yet be visited.

We start by formally defining \textsc{Last-Visits Traversal} and relating it to \CS{\orep}. This idea was already present in  \cite{Wolf2020}.

\begin{problem*}
		\textsc{Last-Visits Traversal} \\
		\textbf{Input:} A digraph $G = (V, E)$ and a relation $R \subseteq V \times V$. \\
		\textbf{Output:} Does there exist a path in $G$ such that for each $(v, u) \in R$
  \begin{itemize}
      \item this path visits $u$,
      \item and if it also visits $v$, then it must visit $u$ after the last time it visits $v$?
  \end{itemize}
\end{problem*}

\begin{lemma}\label{lemma:lvt-to-sync}
	The problem	\textsc{Last-Visits Traversal} for strongly connected digraphs can be reduced in polynomial time to \CS{\orep} for strongly connected synchronizing DFAs.
\end{lemma}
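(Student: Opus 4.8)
The plan is to take an instance $(G, R)$ of \textsc{Last-Visits Traversal} with $G = (V,E)$ strongly connected, and build a strongly connected synchronizing DFA $\Aa'$ together with a constraint relation $R'$ so that a synchronizing word whose $\orep$ agrees with $R'$ exists iff $G$ has a path satisfying the last-visits constraints. First I would turn the digraph into an automaton in the standard way: for each vertex $v$, fix an arbitrary enumeration of its out-edges and let a fresh letter (or a bounded set of letters, reused across vertices via a canonical labeling) realize the corresponding transitions; to keep the automaton complete and deterministic, edges not ``used'' at $v$ by a given letter act as self-loops on $v$. This makes every word trace, from a given start vertex, exactly a walk in $G$, so paths in $\paths(\{v\}, w)$ correspond to walks of $G$ from $v$.

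The key difficulty, and the main obstacle, is that $\orep$ quantifies over \emph{all} start states $r \in Q'$ simultaneously, whereas \textsc{Last-Visits Traversal} asks about a single path. The trick (already hinted at in the intuition paragraph for \CS{\orep}) is to first \emph{synchronize} all states into one designated vertex, and only then let the interesting walk happen. Concretely I would add a fresh letter $z$ that maps every state of $\Aa'$ to one fixed vertex $v_0 \in V$ (chosen arbitrarily), and keep all the original digraph-simulating letters. Since $G$ is strongly connected, one can also guarantee $\Aa'$ is strongly connected. Now, for any word of the form $w = z\,w'$, every path in $\paths(Q', w)$ is the single walk in $G$ starting at $v_0$ traced by $w'$, followed trivially; so the ``for all $r$'' in the definition of $\orep$ collapses to a statement about one walk. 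We must also ensure the required word really starts with $z$: this is where the synchronizability requirement of \CS{\orep} plus the constraint relation do the work — by the observation in the excerpt, $\orep(w)$ agreeing with $R'$ forces every constrained target state to be visited by every path, and adding suitable pairs $(p, v_0)$ or a sink gadget reached only through $v_0$ forces the first letter to be $z$ (or makes any prefix before the first $z$ irrelevant).

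It then remains to translate the last-visits constraints. I would set $R' = R$ (viewing $V \subseteq Q'$), possibly augmented with the synchronization-forcing pairs described above and pairs making some sink state $f$ the last-visited state of every path. For the correctness argument: given a walk $\pi$ in $G$ from $v_0$ realizing all constraints of \textsc{Last-Visits Traversal}, read $z$ then the letters tracing $\pi$, then drive everything into the sink; the resulting word synchronizes $\Aa'$, and since after the initial $z$ there is only one path, its $\orep$ (which on a single path is just the ``later last visit'' relation along that path) contains $R$. Conversely, from a synchronizing word $w$ with $\orep(w) \supseteq R'$, the forcing pairs guarantee $w$ has the form $z w'$ (or that only the suffix after the last pre-$v_0$ segment matters), $w'$ traces a walk from $v_0$, and the fact that $\orep(w)$ agrees with $R$ on that single path says exactly that for each $(v,u) \in R$ the walk visits $u$ and its last visit of $u$ is after its last visit of $v$ — i.e., the walk solves \textsc{Last-Visits Traversal}. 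One should double-check that self-loops introduced for completeness do not create spurious last-visit orderings; they don't, since a self-loop on $v$ only reasserts $v$ as active and never introduces a new last visit of any other state. All steps are clearly polynomial-time, and the constructed DFA is strongly connected and synchronizing by design, giving the claimed reduction.
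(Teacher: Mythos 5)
Your high-level strategy matches the paper's: encode $G$ as a complete DFA whose transitions follow the edges of $G$, make it synchronizing, and exploit the fact that last-visit constraints are insensitive to prefixes (so one can synchronize first and then perform the interesting walk). However, your concrete synchronization mechanism introduces a genuine gap. By adding a reset letter $z$ that maps \emph{every} state to $v_0$, you add transitions $(q,v_0)$ that are not edges of $G$. This breaks the direction from a synchronizing word $w$ with $R \subseteq \orep(w)$ back to a path in $G$: the path traced by $w$ is a walk in $G$ augmented with these shortcut edges, not a walk in $G$. Your two proposed repairs do not close this. Forcing the first letter to be $z$ via extra pairs in $R'$ is not justified (and would not prevent later occurrences of $z$ anyway), and restricting attention to the suffix after the last occurrence of $z$ fails because, for a pair $(v,u)\in R$ with a path that never visits $v$ after position $0$, the definition of $\orep$ only requires $u$ to be visited \emph{somewhere} on that path, possibly before the last reset; the surviving suffix then need not visit $u$ and so need not be a witness for \textsc{Last-Visits Traversal}. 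Your mention of a sink state $f$ is also inconsistent with the requirement that the constructed DFA be strongly connected.

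The paper sidesteps all of this by choosing the synchronizing letter differently: since $G$ is strongly connected, it contains a spanning tree $T$ oriented towards a root $r$, and the fresh letter $s$ acts along the edges of $T$ (with a self-loop at $r$). Then $s^{|Q|}$ synchronizes the DFA, the underlying digraph remains $G$ (up to a harmless self-loop), every path labelled by any word is a walk in $G$, and the constraint relation can be taken to be exactly $R$ with no forcing gadgets. With that, both directions are immediate: a synchronizing word with $\orep(w)\supseteq R$ traces, from any start state, a walk in $G$ that is itself an LVT witness; conversely, given an LVT witness $\rho$ starting at $g$, a synchronizing word mapping everything to $g$ followed by the word spelling $\rho$ works, because $\rho$ is a common suffix of all paths and repairs all last-visit orders. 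You should replace the reset letter by this spanning-tree letter (and, in the forward direction, note that the witness path need not start at your distinguished vertex, so a connecting word must be prepended, which is harmless for the same suffix-repair reason).
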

\begin{proof}
	Given an instance of \textsc{Last-Visits Traversal}, construct the following DFA $\Aa = (Q, \Sigma, \delta)$. Take $Q = V$, and take $\Sigma$ large enough to define $\delta$ in such a way that $G$ is the underlying digraph of~$\Aa$ (by possibly duplicating some edges of $G$ to ensure that the outdegrees of all the vertices are equal). Since $G$ is strongly connected, it has a subgraph $T$ on the whole set $V$ of its vertices such that $T$ is an oriented tree directed towards its root $r$. Add a fresh letter $s$ to $\Sigma$ and define its action according to the edges in $T$, and make it act induce a self-loop for $r$.
	
	Since $G$ is strongly connected, thus obtained automaton $\Aa$ is also strongly connected. It is also synchronizing, since the word $s^{|Q|}$ is synchronizing for it. We claim that there exists a path satisfying the requirements of \textsc{Last-Visits Traversal} if and only if there exists a synchronizing word $w$ for $\Aa$ such that $\orep(w)$ agrees with $R$.
	
	Indeed, if $w$ is a synchronizing word for $\Aa$ such that $\orep(w)$ agrees with $R$, then it is enough to take the path labelled by $w$ and starting in an arbitrary state of $\Aa$. Since for every transition from a state $q$ to a state $q'$ in $\Aa$ there is an edge $(q, q')$ in $G$, this path satisfies all the restrictions of \textsc{Last-Visits Traversal}.
	In the other direction, consider a path $\rho$ in $G$ satisfying all the restrictions of \textsc{Last-Visits Traversal}. For each edge $(q, q')$ of this path, there is a transition from $q$ to $q'$ in~$\Aa$. By concatenating all the letters labelling these transitions in the order of how they appear along the path, we obtain a word $w'$. Let $w$ be a synchronizing word for $\Aa$ mapping all its states the starting state of $\rho$. Since $\Aa$ is strongly connected, such a word exists. Then $ww'$ is a synchronizing word for $\Aa$ such that $\orep(ww')$ agrees with $R$. 
\end{proof}

To show that \CS{\orep} is in \NP{}, we will also need a reduction in the opposite direction.

\begin{lemma}\label{lemma:sync-to-lvt}
    The problem	\CS{\orep} can be reduced in polynomial time to \textsc{Last-Visits Traversal} for strongly connected digraphs.
\end{lemma}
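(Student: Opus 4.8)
The plan is to reduce \CS{\orep} to \textsc{Last-Visits Traversal} by first showing that solving \CS{\orep} splits into two independent tasks: (i) making the DFA synchronize to some state, and (ii) realizing the last-visit order constraints on a single path that starts at that state. The key observation, already hinted at in the section intro, is that in \CS{\orep} every pair $(p,q)\in R$ forces \emph{every} path to visit $q$; in particular, if $w$ is synchronizing, then after the synchronization point all paths coincide, and \emph{that} single suffix path must visit every $q$ appearing as a second coordinate in $R$. Conversely, if a word $w$ is synchronizing for $\Aa$ and its common suffix path (the part starting once all paths have merged) already satisfies all the $\orep$ constraints, then $w$ as a whole does too, because $\last(q,w,\{r\})$ for any start state $r$ equals the position of the last occurrence of $q$ on that common suffix path once merging has occurred — so the last visits of all relevant states happen along that one path and in the order prescribed by it.

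First I would set up the reduction. Given a DFA $\Aa=(Q,\Sigma,\delta)$ and constraint relation $R$, build the digraph $G$ as the underlying digraph of $\Aa$. I then want to add a gadget that records ``the DFA has been synchronized.'' Concretely, I would pick for $G$ a new vertex set or, more carefully, argue that one can first test synchronizability of $\Aa$ in polynomial time (by Eppstein's algorithm, cited in the excerpt) and, if $\Aa$ is not synchronizing, output a trivial no-instance of \textsc{Last-Visits Traversal}. If $\Aa$ \emph{is} synchronizing, fix a synchronizing word $w_0$ and let $f=\delta(Q,w_0)$ be the sink state reached. Now the task ``find a synchronizing word $w$ with $\orep(w)$ agreeing with $R$'' becomes ``find a path in $G$ starting at $f$ realizing all the constraints of $R$,'' since any such path can be prepended by $w_0$'s letters to obtain a valid $w$, and conversely the common suffix of any valid $w$ gives such a path starting at some synchronization target, which by strong connectivity (or by appending another synchronizing word) we may assume is $f$. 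To force the path to \emph{start at $f$}, I would add to $R$ the pairs that make any valid path begin appropriately — for instance, add a fresh source vertex with edges into $f$ only, or more simply note that \textsc{Last-Visits Traversal} on strongly connected digraphs does not fix a start vertex but the constraint ``visit $u$ after the last visit of $v$'' with enough pairs $(v,f)$ can emulate ``end near $f$.'' The cleanest route: since we may freely prepend a synchronizing word, the start vertex of the \textsc{Last-Visits Traversal} path is irrelevant, so I would just take $G$ to be the underlying digraph of $\Aa$ with $R$ unchanged, provided $\Aa$ is synchronizing and strongly connected; and to guarantee strong connectivity of $G$ I would first handle the general case by reducing to the strongly connected case (a valid synchronizing word's suffix lives entirely inside the unique sink SCC of $\Aa$, which must be strongly connected and synchronizing, and all second coordinates of $R$ must lie in it, else it is a no-instance).

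The main steps in order: (1) check that $\Aa$ is synchronizing and compute its unique sink strongly connected component $C$; if there is no sink SCC or more than one, or if some $q$ with $(p,q)\in R$ lies outside $C$, output a trivial no-instance. (2) Restrict attention to $C$: every valid synchronizing word's behavior, from the first moment all active states lie in $C$, is a single path inside $C$, and this is where all last visits of constrained states occur. (3) Set $G$ to be the underlying digraph of $C$ (which is strongly connected) and keep the same $R$ restricted to $C\times C$. (4) Prove equivalence: a synchronizing word $w$ for $\Aa$ with $\orep(w)$ agreeing with $R$ yields, by taking its common-suffix path, a path in $G$ meeting all \textsc{Last-Visits Traversal} constraints; conversely a path $\rho$ in $G$ meeting the constraints yields the word $w_0 w_\rho$ where $w_0$ synchronizes $\Aa$ into the start vertex of $\rho$ (possible since $C$ is strongly connected and reachable, then synchronizable within $C$) and $w_\rho$ spells out $\rho$, and one checks $\orep(w_0 w_\rho)$ agrees with $R$ because last visits of all states in $C$ are governed by $\rho$ along every path.

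The hard part will be step (4), specifically the direction from a valid word $w$ back to a valid path: I need to argue that the ``common suffix path'' is well-defined, i.e., that there is a position after which all $|Q|$ paths induced by $w$ agree, and that for every $r\in Q$ and every constrained state $q$, $\last(q,w,\{r\})$ falls within this common-suffix range. This requires showing that once two paths merge they never split (immediate from determinism), and that the constraints in $R$ prevent a constrained state $q$ from being visited only in the pre-merge prefix of some path — which follows because $\orep$ demands \emph{every} start state's path visits $q$, and if on path from $r$ the last visit of $q$ were before the merge point, then on the merged path thereafter $q$ is never visited, yet the merged path \emph{is} the tail of the path from some \emph{other} $r'$, forcing that other path's last visit of $q$ to also be pre-merge, and an averaging/consistency argument across all start states plus the constraint $\last(p,w,\{r\})<\last(q,w,\{r\})$ pins everything onto the common tail. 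I would also need to double-check the boundary case where $R$ is empty (then any synchronizing word works and the reduction is trivially correct) and the case $|Q|=1$.
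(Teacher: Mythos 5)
Your overall reduction is the same as the paper's: restrict to the unique sink strongly connected component $C$ of the underlying digraph (rejecting if some second coordinate of $R$ lies outside it), take the underlying digraph of $C$ with $R$ restricted to $C\times C$ as the \textsc{Last-Visits Traversal} instance, and translate a witness path back into a word by prepending a synchronizing word that lands at its start vertex; that direction of your argument is essentially right. The genuine gap is in the converse direction, where you propose the \emph{common-suffix path} (the run after all states have merged) as the \textsc{Last-Visits Traversal} witness. That object is the wrong one, and the ``averaging/consistency argument'' you defer to cannot exist: a valid word $w$ for \CS{\orep} may have every path visit a constrained state $q$ only \emph{before} the merge point, in which case the common suffix never visits $q$ and violates the first bullet of \textsc{Last-Visits Traversal}. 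Concretely, take $Q=\{1,2,3\}$, $R=\{(3,2)\}$, a letter $a$ acting as $1\mapsto 2$, $2\mapsto 1$, $3\mapsto 2$, and a letter $b$ sending every state to $1$ (completed to a strongly connected DFA by further letters). The word $ab$ synchronizes and satisfies $\last(3,ab,\{r\})<\last(2,ab,\{r\})$ for every $r$, yet the three paths only merge at the very last position, so the common suffix is the single vertex $1$ and never visits $2$.

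The fix is much simpler than what you attempt and is what the paper does: take the \emph{entire} path labeled by $w$ starting from any single vertex $r$ of $C$. Since $C$ is a sink component, this path stays inside $C$; since the condition $R\subseteq\orep(w)$ is quantified over all start states, in particular over $r\in C$, this path visits every $q$ occurring as a second coordinate of $R$ (because $\last(q,w,\{r\})\neq -\infty$), and whenever it visits $p$ its last visit of $q$ comes strictly after its last visit of $p$. No analysis of merge points is needed and determinism plays no role. Two smaller inaccuracies: your step (2) conflates ``all active states lie in $C$'' with ``the run is a single path'' (the active set can be a non-singleton subset of $C$), and every finite digraph has at least one sink SCC, so only the case of several sinks needs rejecting.
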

\begin{proof}
    Let $\Aa = (Q, \Sigma, \delta)$ and $R \subseteq Q \times Q$ be the input of \CS{\orep}. First, we can assume that $\Aa$ is synchronizing, since it can be checked in polynomial time \cite{DBLP:conf/lata/Volkov08}. We can also assume that $R$ is non-empty. Observe that for every $(p, q) \in R$, the path starting in every state of $\Aa$ must visit $q$. Hence, $q$ must be reachable from every state of $\Aa$, so the underlying digraph of $\Aa$ can have only one maximal strongly connected component $C$ that is a sink, and this component must contain all the states occurring as second elements in the pairs in $R$. Let $w_1$ be a  synchronizing word for $\Aa$. This word necessarily sends all the states to a state in $C$, call this state $f$.
    
    We take the underlying digraph of $C$ and the restriction of $R$ to its set of vertices as the input of \textsc{Last-Visits Traversal}. Assume that there exists a required path $\rho$ for this problem, and let $g$ be the vertex where it begins. Let further $w_2$ be a word mapping $f$ to $g$, and $w_3$ be a word labeling the path $\rho$ in $\Aa$. Then the word $w_1w_2w_3$ is clearly a required word for \CS{\orep}. In the other direction, if there exists a word $w$ required in \CS{\orep}, then the path labeled by $w$ and starting in any vertex of $C$ satisfies the conditions of \textsc{Last-Visits Traversal}. \end{proof}

To prove that \textsc{Last-Visits Traversal} is \NP{}-complete, we consider a polynomially equivalent problem \textsc{First-Visits Traversal}. This problem is obtained by simply reversing all the edges of the digraph and changing all requirements for the required path accordingly. In other words, it is obtained by looking at the path in the reversed order, from the end to the beginning.

\begin{problem*}
\textsc{First-Visits Traversal} \\
\textbf{Input:} A digraph $G = (V, E)$ and a relation $R \subseteq V \times V$. \\
\textbf{Output:} Yes, if and only if there exist a path in $G$ such that for every $(u, v) \in R$
  \begin{itemize}
      \item this path visits $u$,
      \item and if it also visits $v$, then it must visit $u$ before visiting $v$.
  \end{itemize}
\end{problem*}

\begin{proposition}
	In the \textsc{First-Visits Traversal} problem, if there exists a required path, then there is such a path of length at most $(|V| - 1)^2$.  
\end{proposition}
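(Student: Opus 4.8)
The plan is to start from an arbitrary required path and compress it into a short one that visits the same vertices in the same order of first visits; since the constraints of \textsc{First-Visits Traversal} depend only on which vertices appear on the path and on the order in which they are first encountered, the compressed path is automatically still required.

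\textbf{Setup.} Let $\pi$ be a required path, let $S$ be the set of vertices it visits, and index $S = \{s_1, \dots, s_m\}$ so that $s_i$ is the $i$-th distinct vertex to appear along $\pi$; thus $m \le |V|$. Because $\pi$ is required, every vertex occurring as a first coordinate of a pair in $R$ lies in $S$, and for every $(u,v) \in R$ with $u,v \in S$ we have $u = s_i$ and $v = s_j$ with $i < j$.

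\textbf{Compression.} For each $i$ with $1 \le i < m$, consider the factor of $\pi$ running from the first occurrence of $s_i$ to the first occurrence of $s_{i+1}$. Every vertex on this factor other than its last vertex $s_{i+1}$ has already been first-visited, hence belongs to $\{s_1, \dots, s_i\}$; so this factor is a walk from $s_i$ to $s_{i+1}$ inside the induced subgraph $G[\{s_1, \dots, s_{i+1}\}]$. Let $\rho_i$ be a shortest walk from $s_i$ to $s_{i+1}$ in $G[\{s_1, \dots, s_{i+1}\}]$; it exists by the previous sentence, and being shortest it is a simple path, so it uses $s_{i+1}$ only as its last vertex and has at most $|V| - 1$ edges. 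Define $\pi'$ to be the concatenation $\rho_1 \rho_2 \cdots \rho_{m-1}$ (and $\pi' = s_1$ if $m = 1$).

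\textbf{Correctness and length.} By induction on $i$, after the prefix $\rho_1 \cdots \rho_i$ of $\pi'$ the set of visited vertices is exactly $\{s_1, \dots, s_{i+1}\}$ and their order of first visits is $s_1, \dots, s_{i+1}$: all vertices of $\rho_i$ lie in $\{s_1, \dots, s_{i+1}\}$, the vertices $s_1, \dots, s_i$ were already visited, and $s_{i+1}$ occurs only at the very end of $\rho_i$. Hence $\pi'$ is a path visiting precisely the vertices of $S$ in the same order of first visits as $\pi$, so it satisfies exactly the same constraints and is a required path. Finally $\pi'$ consists of $m - 1 \le |V| - 1$ segments, each of length at most $|V| - 1$, so its length is at most $(|V| - 1)^2$.

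\textbf{Main obstacle.} The only delicate point is arguing that the replacement walks $\rho_i$ do not disturb the order of first visits, i.e.\ that $\rho_i$ neither introduces a vertex outside $\{s_1, \dots, s_{i+1}\}$ nor touches $s_{i+1}$ before its end; both are handled by restricting to $G[\{s_1,\dots,s_{i+1}\}]$ and choosing a shortest (hence simple) walk there, and the rest is routine bookkeeping. (The same argument actually yields the sharper bound $\binom{|V|}{2}$, since $\rho_i$ lives on at most $i+1$ vertices, but $(|V|-1)^2$ is all that is required here.)
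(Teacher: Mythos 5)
Your proof is correct and follows essentially the same approach as the paper: decompose the path at the first visits of its vertices and shorten each of the at most $|V|-1$ segments to a simple path of at most $|V|-1$ edges, noting that the constraints depend only on the set of visited vertices and their first-visit order. If anything, your version is slightly more careful than the paper's (which removes cycles from a shortest required path and asserts this is harmless), since restricting each replacement segment to the induced subgraph $G[\{s_1,\dots,s_{i+1}\}]$ makes the preservation of the first-visit order explicit.
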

\begin{proof}
Let $\rho$ be a shortest path with the required properties.	Let $S$ be the set of vertices $u$ such that for some vertex $v$ we have $(u, v) \in R$. Then $S$ is the set of vertices this path must visit. Assume that~$\rho$ visits for the first time the vertices in $S$ in the order $v_1, v_2, \ldots, v_{|S|}$. Represent $\rho$ as a concatenation of path $\rho_i$, $1 \le i \le |S| - 1$, so that $\rho_i$ is the segment of $\rho$ between the first visit of $v_i$ and the first visit of $v_{i + 1}$. We then can assume that for each $i$ the path $\rho_i$ does not have any cycles, since removing such cycles does not decrease the set of vertices $\rho$ can visit in the future. Hence, $\rho$ is a concatenation of at most $|V| - 1$ paths in $G$ containing no cycles, and so its length is at most $(|V| - 1)^2$.  
\end{proof}

\begin{corollary}\label{corr:fvt-in-np}
	\textsc{First-Visits Traversal} is in \NP{}.
\end{corollary}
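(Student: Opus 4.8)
The plan is to read off membership in \NP{} directly from the length bound established in the preceding proposition. The certificate is simply a path $\rho$ in $G$ of length at most $(|V|-1)^2$; by the proposition, whenever a valid path exists, such a short one does, so completeness of the guessing step is immediate. Since a path of this length is described by at most $(|V|-1)^2 + 1$ vertices, the certificate has polynomial size in the input.

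Next I would spell out the polynomial-time verification of a guessed candidate $\rho = v_1 \to v_2 \to \dots \to v_k$ with $k \le (|V|-1)^2 + 1$. First, check that $\rho$ is a genuine path, i.e.\ that $(v_i, v_{i+1}) \in E$ for every $i$; this is a single scan. Second, compute for each vertex the index of its first occurrence on $\rho$ (or record that it never occurs), again by one scan. Third, iterate over all pairs $(u,v) \in R$ and verify the two required conditions: $u$ occurs on $\rho$, and if $v$ also occurs then the first occurrence of $u$ precedes the first occurrence of $v$. Each of these is a constant-time lookup per pair, so the whole verification runs in time polynomial in $|V| + |E| + |R|$.

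There is essentially no obstacle here: the real work was done in the preceding proposition, where the uncrossing/cycle-removal argument bounds the length of a shortest valid path. Given that bound, the corollary is a routine ``guess and check'' argument, and the only thing to be careful about is stating precisely what the certificate is and that its verification handles both the ``must visit $u$'' clause and the ordering-of-first-visits clause.
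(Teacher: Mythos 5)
Your proposal is correct and is exactly the intended argument: the paper derives the corollary directly from the preceding length bound via the standard guess-and-check scheme, with the short path as certificate. The verification details you spell out (path validity, first-occurrence indices, and the two clauses per pair in $R$) are routine and match what the paper leaves implicit.
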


If for each vertex $v$ there is at most one vertex $u$ with $(u, v) \in R$ (which will be the case in the construction in the proof of Proposition~\ref{prop-first-vis}), \textsc{First-Visits Traversal} has the following natural interpretation. By visiting a vertex $u$, we ``unlock'' all vertices $v$ such that $(u, v) \in R$, which are not allowed to be visited before $u$. Hence, the problem asks if, starting from some vertex, it is possible to visit a given set of vertices (defined as $\{u \mid \text{ there is } v \text { such that } (u, v) \in R\}$) in a graph where more and more vertices become available for visiting every time a new vertex in $S$ is visited. The proof of the following proposition uses this intuition.

\begin{proposition} \label{prop-first-vis}
The problem \textsc{First-Visits Traversal} is \NP{}-hard for strongly connected digraphs.
\end{proposition}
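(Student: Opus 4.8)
The plan is to reduce from \textsc{3-SAT}. Let $\phi$ be a $3$-CNF formula with variables $x_1,\dots,x_n$ and clauses $C_1,\dots,C_m$; after trivial preprocessing we may assume $m\ge 1$, every variable occurs in $\phi$, and every clause has exactly three literals over distinct variables. I would build a strongly connected digraph $G$ together with a relation $R$ in which every vertex has at most one \emph{prerequisite} (a vertex $u$ with $(u,v)\in R$), matching the special case highlighted just before the proposition. The vertices of $G$ are: a forward spine $a_1\to a_2\to\cdots\to a_n\to h$; for each variable $x_i$ two branch vertices $u_i^{\mathrm{T}},u_i^{\mathrm{F}}$ with edges $a_i\to u_i^{\mathrm{T}}\to a_{i+1}$ and $a_i\to u_i^{\mathrm{F}}\to a_{i+1}$ (writing $a_{n+1}:=h$); for each clause $C_j$ a check vertex $c_j$ and three gateway vertices $g_j^1,g_j^2,g_j^3$ with edges $h\to g_j^k\to c_j$ and $c_j\to h$; and a reset vertex $\ell$ with edges $h\to\ell\to a_1$. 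The relation $R$ consists of the pairs $(u_i^{\mathrm{T}},g_j^k)$ whenever the $k$-th literal of $C_j$ is $x_i$ and $(u_i^{\mathrm{F}},g_j^k)$ whenever it is $\neg x_i$; the chain $(c_1,c_2),(c_2,c_3),\dots,(c_{m-1},c_m)$; and the pair $(c_m,\ell)$. Thus the set of vertices the path is obliged to visit is $S=\{c_1,\dots,c_m\}\cup\{u_i^{\mathrm{T}},u_i^{\mathrm{F}}\ \text{occurring in }R\}$, the graph $G$ is strongly connected (every vertex reaches $h$, and from $h$ everything is reachable, $a_1$ via $\ell$), and every vertex has at most one $R$-prerequisite. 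The intended meaning is that choosing $u_i^{\mathrm{T}}$ or $u_i^{\mathrm{F}}$ fixes the truth value of $x_i$ and unlocks exactly the gateways of the literal occurrences it satisfies, so $c_j$ is reachable iff some literal of $C_j$ is satisfied.

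For the ``if'' direction I would, given a satisfying assignment $\alpha$, assemble the path in three stages: walk the spine taking $u_i^{\alpha(i)}$ in each gadget and reaching $h$; then tour the checks in the order $c_1,\dots,c_m$, entering $c_j$ through the gateway $g_j^{k_j}$ of a literal of $C_j$ satisfied by $\alpha$ (its unique prerequisite, a first-stage branch vertex, is already visited) and returning to $h$; then, $c_m$ being visited, use $h\to\ell\to a_1$ and walk the spine again taking the opposite branches $u_i^{\neg\alpha(i)}$, which visits all remaining branch vertices. One checks that this is a genuine walk in $G$, that all of $S$ is visited, and that all precedence pairs hold: the only gateways ever entered are the $g_j^{k_j}$, whose prerequisites lie in stage one, so the stage-three branch vertices are never required to precede an entered gateway, while the $c_j$ are visited in the chain order.

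For the ``only if'' direction, let $\rho$ be any path satisfying the constraints and let $\rho'$ be its prefix ending at the first occurrence of $c_m$ (which exists since $c_m\in S$). Since $(c_m,\ell)\in R$ the vertex $\ell$ does not occur in $\rho'$, and since $c_1,\dots,c_m$ are chained they all occur in $\rho'$. A path avoiding $\ell$ must start at a vertex with no prerequisite, which---using $m\ge1$ and that every variable occurs---forces it to start on the spine; and in $G\setminus\{\ell\}$ the spine is acyclic and, once a walk reaches $h$, it remains inside the closed region $\{h\}\cup\{g_j^k\}\cup\{c_j\}$. Hence $\rho'$ visits at most one of $u_i^{\mathrm{T}},u_i^{\mathrm{F}}$ per $i$ and so defines a partial assignment $\beta$. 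Each $c_j$ is reached in $\rho'$ only by entering some gateway $g_j^k$, whose unique prerequisite is the branch vertex realizing the $k$-th literal of $C_j$ and hence already occurs in $\rho'$; so $\beta$ satisfies a literal of every clause and extends to a satisfying assignment of $\phi$.

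The hard part---around which the whole construction is arranged---is the tension between ``at most one branch per variable'' and ``both branch vertices must eventually be visited'': a path that visited both $u_i^{\mathrm{T}}$ and $u_i^{\mathrm{F}}$ before finishing the clauses could realize an inconsistent assignment and cover too many clauses, yet both branch vertices lie in $S$. Gating $\ell$ behind the entire chain of checks resolves this---the only route back onto the spine becomes available only after every clause has already been covered, so re-walking the spine with different branch choices cannot help---while the explicit two-pass traversal in the ``if'' direction shows that forcing both branch vertices into $S$ costs nothing. What remains are the routine verifications that $G$ is strongly connected, that $R$ has at most one prerequisite per vertex, and that all paths described really are paths in $G$, which are immediate from the edge list.
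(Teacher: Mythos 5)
Your reduction is essentially the same as the paper's: a forward chain of variable gadgets in which choosing one of two branch vertices fixes a truth value, clause gadgets that can only be entered through a vertex unlocked by a matching literal choice, a single return edge to the start of the chain gated behind completion of all clauses, and a second pass that picks up the complementary branch vertices (needed because they occur as left-hand sides of pairs in $R$). The structural differences are cosmetic: the paper wires the clause gadgets in series and forces the starting vertex via the pairs $(x'_1,f_1)$ and $(f_1,f_2)$, while you use a hub $h$ and enforce the clause order through the chain $(c_1,c_2),\dots,(c_{m-1},c_m)$ in $R$. One step of your ``only if'' direction is not literally correct, though: having no $R$-prerequisite does not force the path to start on the spine, because $c_1$ and $h$ also have no prerequisites. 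For $m\ge 2$ this is harmless --- a path starting at $c_1$ or $h$ is immediately stuck at $h$, since every gateway is locked behind an unvisited branch vertex and $\ell$ is locked behind $c_m$, so it could never reach $c_2$ --- but for $m=1$ a path genuinely can start at $c_1$, after which $\ell$ is already unlocked and your extraction of the partial assignment $\beta$ breaks down (the reduction still happens to be correct only because a single clause over three distinct variables is always satisfiable). Patch this either by adding a pair such as $(a_1,\ell)$ to $R$, which forces the path to start at $a_1$ exactly as the paper's pair $(x'_1,f_1)$ forces a start at $x'_1$, or by spelling out the stuck-at-$h$ argument and disposing of $m=1$ separately.
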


\begin{proof}
We reduce from the \NP-compete  \textsc{CNF-SAT} problem~\cite{Karp1972}.

\begin{problem*} \textsc{CNF-SAT} \\
\textbf{Input:} A set $C = \{c_1, \ldots, c_m\}$ of clauses over Boolean variables $x_1, \ldots, x_n$. \\
\textbf{Output:} Does there exist an assignment which satisfies all the clauses?
\end{problem*}

The intuition behind the reduction is as follows, see \Cref{ex-sat} and \Cref{fig-sat} for an illustration. With the traversal constraints, we force the path to start in the leftmost vertex. Then, in the left half of \Cref{fig-sat}, every choice of a vertex corresponds to assigning the value $0$ or $1$ to a variable. The right half of \Cref{fig-sat} corresponds to the clauses of the \textsc{CNF-SAT} formula. With the traversal constraints, we make sure that the vertex $c'_i$ corresponding to a clause $c_i$ can be reached if and only if the picked value of at least one of the variables in $c_{i-1}$ satisfies the clause $c_{i-1}$. In this case, the path can visit the vertex corresponding to this occurrence and proceed to the next clause. The same is true for $f_1$ and $c'_m$. After reaching the vertex $f_1$ in \Cref{fig-sat}, the path visits the vertices in the leftmost half that assign the values to the variables that are opposite to the values picked at the beginning.

Formally, given an instance of \textsc{CNF-SAT}, we construct the following graph $G = (V, E)$. For each variable~$x_i$, $1 \le i \le n$, we add three vertices $x'_i, x_i^0, x_i^1$ to $V$. For $1 \le i \le n - 1$, we add to $E$ the edges
$$(x'_i, x_i^0), (x'_i, x_i^1), (x_i^0, x'_{i + 1}), (x_i^1, x'_{i + 1}).$$

For each clause $c_j$, $1 \le j \le m$, we add a set of vertices $\{c'_j\} \cup \{c_j^k \mid x_k \in c_j \text{ or } \overline{x_k} \in c_j\}$ to $V$. For each $j$, $1 \le j \le m - 1$, we add to $E$ the edges
$$\{(c'_j, c_j^k) \mid x_k \in c_j  \text{ or } \overline{x_k} \in c_j\} \cup \{(c_j^k, c'_{j+1}) \mid x_k \in c_j  \text{ or } \overline{x_k} \in c_j\}.$$

We add to $V$ new vertices $f_1, f_2$. Finally, we add to $E$ the edges 
$$(x_n^0, c'_1), (x_n^1, c'_1), \{(c_m^k, f_1) \mid x_k \in c_j  \text{ or } \overline{x_k} \in c_j\}, (f_1, f_2), (f_2, x'_1),$$
and observe that thus constructed digraph $G$ is strongly connected. 

The relation $R$ is constructed as follows. For each clause $c_j$, we add to $R$ the pair $(x_i^1, c^i_j)$ if $x_i$ occurs in $c_j$, and the pair $(x_i^0, c^i_j)$ if $\overline{x_i}$ occurs in $c_j$. We also add the pairs $(x'_1, f_1)$ and $(f_1, f_2)$ to $R$. Example \ref{ex-sat} below illustrates this construction. We claim that there exists a path in $G$ satisfying the constraints $R$ if and only if the input of the \textsc{CNF-SAT} problem has a satisfying assignment.

In one direction, let $a_1, \ldots, a_n \in \{0, 1\}$ be an assignment satisfying $C$. Consider the path $\rho_1 = x'_1, x_1^{a_1}, x'_2, x_2^{a_2}, x'_3, \ldots, x'_n, x_n^{a_n}$. By construction of $R$, $c_j^k$ can be visited only after $x_k^h$ is visited, where $h = 0$ if $\overline{x_k} \in c_j$, and $h = 1$ if $x_k \in c_j$. Hence, since $a_1, \ldots, a_n$ satisfies $C$, any path starting with~$\rho_1$ has now, for each $1 \le j \le m$, at least one vertex $c_j^k$ that it can visit. Let $\rho_2$ be such a path starting in $c'_1$ and ending in $f_2$. It now remains to complete $\rho_1 \rho_2$ with a path $\rho_3$ starting in $x'_1$ that visits the vertices $x_i^{h_i}$, $1 \le i \le n$ and $h_i \in \{0, 1\}$, that were not visited by $\rho_1 \rho_2$. The path $\rho_1 \rho_2 \rho_3$ then satisfies the requirements of \textsc{First-Visits Traversal}.

In the other direction, let $\rho$ be a path in $G$ which satisfies the requirements of \textsc{First-Visits Traversal}. Since $(f_1, f_2) \in R$, state $f_1$ must be visited, but since $(x'_1, f_1) \in R$, $x'_1$ must be visited first. This means that $\rho$ must start in $x'_1$, since otherwise it is not possible to visit $x'_1$ before $f_1$. Let $\rho_1 \rho_2$ be the prefix of $\rho$ such that $\rho_1$ ends in $x_n^h$ for $h \in \{0, 1\}$ and does not visit $c'_1$, and $\rho_2$ starts in $c'_1$, ends in $f_1$ and does not visit $f_2$. For each $j$, $1 \le j \le m$, the path $\rho_2$ must visit one of the vertices $c_j^k$. By construction of $R$, this means that $\rho_1$ must first visit the vertex $x_k^1$ if  $x_k \in c_j$, and $x_k^0$ if $\overline{x_k} \in c_j$. For $1 \le i \le n$, take $a_i = 0$ if $\rho_1$ visits $x_i^0$, and $a_i = 1$ if $\rho_1$ visits $x_i^1$. We get that  $a_1, \ldots, a_n$ is a satisfying assignment for the input of \textsc{CNF-SAT}.
\end{proof}

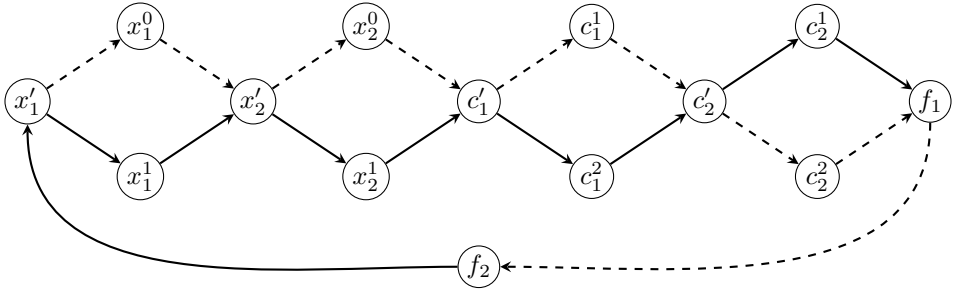
\begin{figure}[ht]\centering
\begin{tikzpicture} [node distance = 2cm]
\tikzset{every state/.style={inner sep=1pt,minimum size=1.5em}}

\node [state] at (1,-1) (x'1) {$x'_1$};
\node [state] at (2.5,0) (x10) {$x_1^0$};
\node [state] at (2.5,-2) (x11) {$x_1^1$};

\node [state] at (4,-1) (x'2) {$x'_2$};
\node [state] at (5.5,0) (x20) {$x_2^0$};
\node [state] at (5.5,-2) (x21) {$x_2^1$};

\node [state] at (7,-1) (c'1) {$c'_1$};
\node [state] at (8.5,0) (c11) {$c_1^1$};
\node [state] at (8.5,-2) (c12) {$c_1^2$};

\node [state] at (10,-1) (c'2) {$c'_2$};
\node [state] at (11.5,0) (c21) {$c_2^1$};
\node [state] at (11.5,-2) (c22) {$c_2^2$};

\node [state] at (13,-1) (f1) {$f_1$};
\node [state] at (7,-3.2) (f2) {$f_2$};

\path [-stealth, thick]
(x'1) edge [dashed] (x10)
(x'1) edge (x11)
(x10) edge [dashed] (x'2)
(x11) edge [] (x'2)

(x'2) edge [dashed] (x20)
(x'2) edge  (x21)
(x20) edge [dashed] (c'1)
(x21) edge [] (c'1)

(c'1) edge [dashed] (c11)
(c'1) edge (c12)

(c11) edge [dashed] (c'2)
(c12) edge (c'2)

(c'2) edge (c21)
(c'2) edge[dashed] (c22)

(c21) edge (f1)
(c22) edge[dashed] (f1)

(f1) edge [dashed, out=-90, in=0] (f2)

(f2) edge [out=-180, in=-90] (x'1)
;
\end{tikzpicture}
\caption{Illustration for Example \ref{ex-sat}.}\label{fig-sat}
\end{figure}

\begin{example}\label{ex-sat}
	Consider the CNF formula $(\overline{x}_1 \lor x_2) \land (x_1 \lor \overline{x}_2)$. The digraph from the reduction in the proof of Proposition \ref{prop-first-vis} is depicted in Figure \ref{fig-sat}. The set $R$ consists of the following pairs:
	$$(x_1^0, c_1^1), (x_2^1, c_1^2), (x_1^1, c_2^1), (x_2^0, c_2^2), (x'_1, f_1), (f_1, f_2).$$
	
	The path $\rho_1 \rho_2$ corresponding to the assignment $x_1 = 0, x_2 = 0$ is depicted by dashed edges. To satisfy the conditions, the extension of $\rho_1 \rho_2$ must visit $x_1^1$ and $x_2^1$.
\end{example}

Since \textsc{CNF-SAT} remains \NP{}-complete if every clause contains only three literals, we get that the digraph in the proof of \Cref{prop-first-vis} has at most three edges outgoing from each vertex. By combining \Cref{lemma:lvt-to-sync}, \Cref{lemma:sync-to-lvt}, \Cref{corr:fvt-in-np} and \Cref{prop-first-vis}, we thus obtain the main result of this section.

\begin{theorem}\label{thm:l-l-npc}
The problem \CS{\orep} is \NP{}-complete, even for strongly connected DFAs over a constant-size alphabet.
\end{theorem}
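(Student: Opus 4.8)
The plan is to assemble \Cref{thm:l-l-npc} from the pieces already developed in this section, so the proof is essentially a bookkeeping argument once the two hard lemmas are in place. First I would establish \NP{}-hardness of \CS{\orep}: by \Cref{prop-first-vis}, \textsc{First-Visits Traversal} is \NP{}-hard for strongly connected digraphs, and since \textsc{Last-Visits Traversal} is obtained from \textsc{First-Visits Traversal} by reversing all edges and swapping the roles of first and last visits (reading the path backwards), the two problems are polynomial-time equivalent, so \textsc{Last-Visits Traversal} is \NP{}-hard for strongly connected digraphs as well. Then \Cref{lemma:lvt-to-sync} reduces \textsc{Last-Visits Traversal} on strongly connected digraphs to \CS{\orep} on strongly connected synchronizing DFAs, giving \NP{}-hardness of \CS{\orep} even under these structural restrictions. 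I would additionally note that, as remarked after \Cref{prop-first-vis}, \textsc{CNF-SAT} with three literals per clause already suffices, so the digraph constructed in the proof of \Cref{prop-first-vis} has out-degree at most three; carrying this bound through the (degree-preserving, up to edge duplication) reduction of \Cref{lemma:lvt-to-sync} shows that the DFA can be taken over a constant-size alphabet.

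For membership in \NP{}, I would invoke \Cref{lemma:sync-to-lvt}, which reduces \CS{\orep} in polynomial time to \textsc{Last-Visits Traversal} on strongly connected digraphs, composed with the edge-reversal equivalence to \textsc{First-Visits Traversal}, which by \Cref{corr:fvt-in-np} lies in \NP{}. Chaining a polynomial-time reduction into an \NP{} problem keeps us in \NP{}, so \CS{\orep} is in \NP{}. Combining the two directions yields \NP{}-completeness, and the structural refinements (strong connectivity, constant-size alphabet) are preserved throughout, establishing the theorem as stated.

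I do not expect the assembly itself to present any obstacle — the genuine difficulty has already been absorbed into the earlier results, in particular the \textsc{CNF-SAT} reduction of \Cref{prop-first-vis} (where one must carefully design the ``unlocking'' gadget so that reaching a clause vertex forces a satisfying literal to have been visited, and the return path through $f_1,f_2$ forces the complementary variable assignments) and the two transfer lemmas \Cref{lemma:lvt-to-sync} and \Cref{lemma:sync-to-lvt} (where one must handle the sink strongly connected component, prepend a synchronizing word, and argue that not visiting a state $p$ never helps satisfy an $\orep$-constraint). Given all of this, the only point requiring a line of care in the final proof is checking that the out-degree bound really does survive the passage from digraph to DFA in \Cref{lemma:lvt-to-sync}, and that adding the fresh synchronizing letter $s$ keeps the alphabet size constant. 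With those observations noted, the proof is a three-sentence composition of existing results.
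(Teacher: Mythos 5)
Your proposal is correct and matches the paper's own argument, which likewise obtains the theorem by combining \Cref{lemma:lvt-to-sync}, \Cref{lemma:sync-to-lvt}, \Cref{corr:fvt-in-np} and \Cref{prop-first-vis}, using the edge-reversal equivalence of \textsc{Last-Visits Traversal} and \textsc{First-Visits Traversal} and the three-literals-per-clause restriction of \textsc{CNF-SAT} to bound the out-degree and hence the alphabet size. No gaps; the assembly is exactly as the paper intends.
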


Finally, observe that in \textsc{First-Visits Traversal} the requirement on the path induced by $(u, v) \in R$ can be expressed in \LTL{} as $\neg v \Until u$. The \LTL{} model checking problem can be stated as follows. Given a digraph $G = (V, E)$ with a selected vertex $v$ and an \LTL{} formula $\phi$ over the set of atomic propositions $V$, does there exist a path in $G$ starting in $v$ that satisfies $\phi$? We get the following result from \Cref{corr:fvt-in-np} and \Cref{prop-first-vis}.

\begin{theorem}\label{thm:until-npc}
    \LTL{} model checking is \NP{}-complete for strongly connected digraphs and formulas which are conjunctions of formulas $\neg v \Until u$ with $u, v \in V$.
\end{theorem}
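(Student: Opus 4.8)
The plan is to read the statement through the equivalence between conjunctions of $\neg v \Until u$ formulas and the \textsc{First-Visits Traversal} constraints, observed just above. The key point is that under the finite-trace semantics, a path $\rho$ in $G$ satisfies $\neg v \Until u$ at its first position if and only if $\rho$ visits $u$ and no occurrence of $v$ precedes the first occurrence of $u$ on $\rho$: indeed, the until operator carries an implicit eventuality forcing some occurrence of $u$, and ``$\neg v$ up to that occurrence'' is equivalent to ``$\neg v$ up to the first occurrence of $u$''. This is exactly the requirement \textsc{First-Visits Traversal} imposes for the pair $(u,v)$, so a conjunction $\phi = \bigwedge_k (\neg v_k \Until u_k)$ holds on $\rho$ if and only if $\rho$ meets all the \textsc{First-Visits Traversal} constraints given by the relation $R = \{(u_k, v_k)\}_k$.

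For membership in \NP{}, fix a digraph $G = (V,E)$, a start vertex, and such a formula $\phi$. By the remark above, we are asking whether there is a path from the start vertex visiting each $u_k$ with no $v_k$ before the first visit of $u_k$. As in the proof of the length bound preceding \Cref{corr:fvt-in-np}, if such a path exists then a shortest one decomposes into at most $|V|$ cycle-free segments delimited by the first visits of the relevant targets, hence has length $O(|V|^2)$: deleting a cycle from a segment can only shrink the set of vertices seen so far, so it preserves all the constraints. One can therefore nondeterministically guess a path of polynomial length and verify all conjuncts in polynomial time.

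For \NP{}-hardness, I would reuse the digraph $G$ and the relation $R$ built from a \textsc{CNF-SAT} (equivalently \textsc{3-SAT}) instance in the proof of \Cref{prop-first-vis}; recall that $G$ is strongly connected and has out-degree at most three. Take $x'_1$ as the selected start vertex and set $\phi = \bigwedge_{(u,v) \in R} (\neg v \Until u)$. In the forward direction, a satisfying assignment yields, by the very construction in that proof, a path that literally starts at $x'_1$ and meets every \textsc{First-Visits Traversal} constraint, hence satisfies $\phi$. In the backward direction, a path from $x'_1$ satisfying $\phi$ meets all the constraints of $R$, and the extraction of a satisfying assignment is exactly the argument in the proof of \Cref{prop-first-vis} (and is even a little simpler here, since the start vertex is prescribed rather than argued to be forced).

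The only step requiring genuine care is the first one: matching the finite-trace semantics of the until operator to the \textsc{First-Visits Traversal} condition, in particular the two subtleties that $\neg v \Until u$ demands that $u$ occur and that it constrains $v$ only up to the \emph{first} occurrence of $u$. A minor point is the choice of start vertex: the hardness reduction needs the forward direction to produce a path beginning exactly at $x'_1$, which the construction of \Cref{prop-first-vis} does; if one wanted a reduction agnostic to the starting vertex, one could instead prepend a fresh source vertex with edges to and from every vertex of $G$ (this preserves strong connectivity) that appears in no conjunct of $\phi$, but this detour is not needed.
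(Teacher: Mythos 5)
Your proposal is correct and follows essentially the same route as the paper: the observation that $\neg v \Until u$ under finite-trace semantics is exactly the \textsc{First-Visits Traversal} constraint for $(u,v)$, membership in \NP{} via the polynomial length bound of \Cref{corr:fvt-in-np}, and hardness by instantiating the \textsc{CNF-SAT} reduction of \Cref{prop-first-vis} with $x'_1$ as the selected start vertex. The paper states this only as a one-line remark citing \Cref{corr:fvt-in-np} and \Cref{prop-first-vis}; your write-up supplies the same chain with the semantic details made explicit.
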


For formulas in $\LLL^+(\Until)$, LTL model checking is \PSPACE{}-complete \cite{Markey2004}, and hence it is \PSPACE{}-complete for such formulas in \LTL{} by the construction in the proof of Theorem 1 in \cite{DeGiacomo2013}. Hence, our much stronger constraints, including completely forbidding the nesting of until operators, drop the complexity for $\LLL^+(\Until)$, but only to \NP{}-complete.

\section{Conclusions and open problems}\label{sec:conclusions}
In this paper, we considered the problems of finding a word $w$ labeling paths satisfying a certain \LTL{} formula $\phi$ in a DFA $\Aa$. We concentrated on two cases: paths labeled by $w$ and starting in every state of $\Aa$, and the unique path in the power-set automaton of $\Aa$ labeled by $w$. We showed that both cases are solvable in polynomial space, and remain \PSPACE{}-complete even for very strong restrictions, e.g., for some fixed formulas.

One of the most natural questions is to find cases where these problems become tractable. This might be done by either considering a combination of restrictions on the DFA and the formula, or by considering different parameters and investigating the parameterized complexity of the problem. For example, observe that no \PSPACE{} lower bound in \Cref{sec:pspace-c} works for strongly connected DFAs. In combination with the requirement that the constraint relation has constant size, this might lead to polynomial time solvability. 

Another direction is to consider fragments of \LTL{} with limited modalities (such as $\LLL^+(\Glob)$) and limited nesting (such as the fragment considered in \Cref{thm:until-npc}). \Cref{subsec:pspace-path} provides such results for model checking on paths with formulas from $\LLL^+(\Event)$ and $\LLL^+(\Glob)$, but for model checking on sets the effect of such restrictions remains open.

Finally, there could be other natural ways to succinctly define DFAs. \Cref{sec:in-pspace} provides necessary conditions to perform model checking for them in \PSPACE{}, but it would be interesting to find some general conditions when the complexity of model checking over such succinctly defined DFAs drops below \PSPACE{}.

\subsection*{Acknowledgements}
We thank Antonio Di Stasio and Przemys{\l}aw Wa{\l}\c{e}ga for useful remarks and fruitful conversations. Andrew Ryzhikov is supported by the European Research Council (ERC) under the European Union’s Horizon 2020 research and innovation programme (Grant agreement No. 852769, ARiAT). Petra Wolf is supported by the French ANR, project ANR-22-CE48-0001 (TEMPOGRAL).

\bibliographystyle{alpha}
\bibliography{sync_under_order}

\end{document}